\documentclass{llncs}

\usepackage{mathtools,amssymb}
\usepackage{tikz-cd}
\usepackage{multicol}
\usepackage{booktabs}
\usepackage{graphicx}
\usepackage{mathpartir}
\usepackage{array}
\usepackage{url}
\usepackage{algorithm}
\usepackage{algpseudocode}
\usepackage{etoolbox}
\usepackage{wrapfig}
\usepackage{cite}
\usepackage[hidelinks]{hyperref}
\usepackage{thmtools}
\usepackage{thm-restate}

\newcommand{\solvername}{Z3-Noodler\textsubscript{$\mathcal{C}$}}
\newcommand{\Reg}{\mathtt{Reg}}
\newcommand{\SEq}{\mathtt{SEq}}
\newcommand{\LVar}{\mathtt{LVar}}
\newcommand{\Eq}{\mathtt{Eq}}
\newcommand{\Len}{\mathtt{Len}}
\newcommand{\sat}{\mathsf{sat}}
\newcommand{\unsat}{\mathsf{unsat}}
\newcommand{\unknown}{\mathsf{unknown}}
\newcommand{\Var}{\operatorname{Var}}
\newcommand{\eq}{\operatorname{eq}}
\newcommand{\pred}{\operatorname{pred}}
\newsavebox{\mainresultstablebox}
\newcommand{\resultheader}[2]{%
  \shortstack[c]{\strut #1\\\strut (#2)}%
}

\makeatletter
\patchcmd{\ALG@doentity}
  {\ifx\ALG@text\ALG@x@notext\item[]\nointerlineskip\else\item\fi
   \noindent\hskip\ALG@tlm}
  {\ifx\ALG@text\ALG@x@notext\else
     \item\noindent\hskip\ALG@tlm
   \fi}
  {}{\PackageError{main}{Failed to patch algorithmicx hidden rows}{}}
\makeatother

\makeatletter
\patchcmd\WF@putfigmaybe{\lower\intextsep}{}{}{\fail}%
\patchcmd\WF@putfigmaybe
  {\global\advance\WF@size-\intextsep}{}{}{\fail}%
\AddToHook{env/wrapfigure/begin}{\setlength{\intextsep}{0pt}}
\makeatother

\newsavebox{\conditionbox}
\newcommand{\fitcondition}[1]{%
  \begingroup
  \sbox{\conditionbox}{\small$\displaystyle #1$}%
  \ifdim\wd\conditionbox>\textwidth
    \resizebox{\textwidth}{!}{\usebox{\conditionbox}}%
  \else
    \usebox{\conditionbox}%
  \fi
  \endgroup
}

\title{Taming Cyclic Dependencies in Stabilization-Based String Constraint Solving}
\author{Siyi Wu \and Bican Xia}
\institute{
  School of Mathematical Sciences, Peking University, Beijing, China \\
  \email{\{wsyzh@stu.pku.edu.cn,xbc@math.pku.edu.cn\}}
}

\begin{document}
\maketitle

\begin{abstract}
Word equations with cyclic variable dependencies pose a challenge to SMT string
solvers. In stabilization-based procedures, such dependencies can trigger repeated
refinement of regular-language assignments without reaching stability.
We present \solvername{}, an extension of
Z3-Noodler, combining equality-aware stabilization with dedicated loop reasoning.
We introduce equality predicates to retain relevant equations as single stability
obligations, and extend stabilization with equality-aware substitution and
adapted concatenation-eliminating refinement to expose cyclic dependencies
as overlapping variable bindings.
Dedicated loop reasoning rules then exploit the periodic structure
of these bindings, avoiding certain infinite derivations.
We evaluate \solvername{} against Z3-Noodler, cvc5 and OSTRICH2.
On 64 systematically generated, syntactically simple instances
involving cyclic dependencies
and regular constraints, our solver solves all instances, compared with only 20 solved
for the best baseline.
On equation-oriented instances from SMT-LIB'25, \solvername{} achieves the highest coverage, solving 20 more instances than Z3-Noodler while maintaining competitive aggregate CPU time.
\end{abstract}

\section{Introduction}

String constraint solving is a key technology in program analysis
and software verification.  It has been used to verify string-manipulating programs, especially web applications
\cite{bjorner2009path,abdulla2014string,chen2019decision,wassermann2007sound,trinh2014s3,eriksson2023black,loring2019sound,saxena2010symbolic,yu2014automata,li2014symjs},
and analyze cloud-access policies~\cite{backes2018semantic,rungta2022billion}.
Its practical importance has led both to the integration of string reasoning
into general-purpose SMT solvers such as
Z3~\cite{trinh2016progressive,de2008z3,stanford2021symbolic} and
cvc5~\cite{barbosa2022cvc5,liang2016efficient,liang2014dpll}, and to the development of specialized solvers such as
OSTRICH~\cite{hague2025ostrich2,chen2019decision}.

\emph{Stabilization}~\cite{blahoudek2023word} has emerged as an effective
basis for practical string constraint solving.  It reduces satisfiability of
word equations with regular constraints to finding a stable regular assignment
over the variables: the solver
repeatedly refines the assigned regular languages until all language inclusions
induced by the equations become stable. Dependencies among these inclusions are
represented by a directed graph called the inclusion graph, whose vertices are
the inclusions and whose edges capture how a refinement for one predicate may
affect another.
Stabilization forms the basis of the core decision procedure of
Z3-Noodler~\cite{chen2024z3,chen2023solving,havlena2024cooking,
chocholaty2025z3,chocholaty2026string}, in which it is combined with length
reasoning and support for a broader constraint language. Z3-Noodler has
achieved leading performance in the \texttt{QF\_Strings} division at 
SMT-COMP 2024 and 2025~\cite{smtcomp2024strings,smtcomp2025strings}.

Despite this strong empirical performance, stabilization still can fail to
terminate for instances beyond the chain-free~\cite{abdulla2019chain}
setting. An underlying problem is that successive refinement steps
may propagate constraints through a cycle in the inclusion graph and
eventually affect a predicate that has already been stabilized, forcing its
stability to be checked again. Repeated propagation around
the cycle can therefore produce an unbounded sequence of increasingly precise
regular assignments without reaching stability.

This challenge cannot be addressed simply by falling back on a general decision
procedure. Although complete algorithms
for word equations are known~\cite{makanin1986problem,jez2016recompression}, no
implementation of these general algorithms is known to be efficient in
practice. Moreover, the decidability of word equations with linear length
constraints, a combination central to practical SMT solving, has remained open
for years. Therefore, existing solvers typically provide completeness
guarantees for restricted fragments while relying on specialized mechanisms to
handle harder instances. However, even syntactically small instances can remain
challenging when cyclic dependencies are present, as our evaluation demonstrates.

\paragraph{Our approach.}
We present \solvername{}, an SMT solver that extends Z3-Noodler's
stabilization-based decision procedure with equality-aware stabilization and
native loop reasoning for cyclic variable dependencies.  Two key 
observations motivate our work. First, the word-combinatorial relationships
imposed by certain equations involved in cyclic
dependencies can yield substantially stronger constraints on the
candidate solution space than generic inclusion-based refinement, while still
being expressible using ordinary word equations and regular constraints.
Second, the concatenation-elimination stabilization~\cite{chen2023solving},
initially proposed to isolate
variables relevant to length reasoning, can be reused to keep shared
variable occurrences explicit during refinement.

Based on these observations, \solvername{} extends stabilization with equality
predicates to retain each equation involved in cyclic dependencies as a single
stability obligation rather than as two inclusion predicates. This provides a
richer reasoning context than inclusion-level refinement, enabling equation-level
reduction and specialized reasoning that can derive stronger constraints.
At the same time, no equality predicates are introduced for chain-free inputs
by construction. Since the extensions take effect
only in the presence of equality predicates, the procedure coincides with the
original stabilization procedure on the chain-free fragment, preserving its
termination and completeness guarantees. Building on equality predicates, we
develop equality-aware stabilization, which extends
the original procedure with equality-specific
reduction, substitution, and refinement. In particular, we adapt
concatenation-eliminating stabilization to preserve the
boundaries of shared variable occurrences and expose
alignment equations for subsequent substitution.

\paragraph{Results.}
We implement \solvername{} on top of Z3-Noodler 1.6.1 and evaluate it
against Z3-Noodler, cvc5, and OSTRICH2. On a systematically generated diagnostic
benchmark of 64 syntactically small non-chain-free instances combining cyclic
dependencies with regular constraints, 
\solvername{} solves all 64 instances, compared with 12 for Z3-Noodler,
20 for cvc5, and 12 for OSTRICH2.
Disabling the loop reasoning rules in \solvername{} reduces coverage from 64
to 32 instances, providing direct evidence of their contribution. On 25,657
equation-oriented benchmarks from SMT-LIB'25, \solvername{} achieves the highest
coverage while remaining competitive in solving time. 

\section{Preliminaries}
\label{sec:preliminaries}

\paragraph{Regular languages.}
Let $\Sigma$ be a finite alphabet and
$\Sigma_\varepsilon=\Sigma\cup\{\varepsilon\}$.  A \textit{nondeterministic
finite automaton (NFA)} is a tuple $A=(Q,\Sigma,\Delta,Q_I,Q_F)$, where $Q$ is
a finite set of states, $\Delta\subseteq Q\times\Sigma_\varepsilon\times Q$ is
a set of \textit{transitions}, and $Q_I,Q_F\subseteq Q$ are the sets of
initial and final states, respectively.  We write $(q,a,q')\in\Delta$ as
$q\xrightarrow{a}q'$.  A \textit{run} of $A$ over a word $w\in\Sigma^*$ is a
sequence of transitions $q_0\xrightarrow{a_1}q_1,\ldots,
q_{n-1}\xrightarrow{a_n}q_n$ such that $a_1\cdots a_n=w$.  The run is
\textit{accepting} if
$q_0\in Q_I$ and $q_n\in Q_F$.  The \textit{language recognized by} $A$ is
the set $L(A)\subseteq\Sigma^*$ of words over which $A$ has an accepting run.
A language is \textit{regular} if it is recognized by an NFA.  Let
$A'=(Q',\Sigma,\Delta',Q'_I,Q'_F)$ be another NFA with a state set disjoint
from $Q$.  The concatenation $L(A)\cdot L(A')$ is recognized by the NFA
\(A\cdot A'=
  (Q\cup Q',\Sigma,
   \Delta\cup\Delta'\cup
   \{(p,\varepsilon,q)\mid p\in Q_F,\ q\in Q'_I\},Q_I,Q'_F).
\)
The intersection $L(A)\cap L(A')$ is recognized by the NFA
\(
  A\cap A'=(Q\times Q',\Sigma,\Delta^\times,
             Q_I\times Q'_I,Q_F\times Q'_F),
\)
where $(q,q')\xrightarrow{a}(r,r')\in\Delta^\times$ if and only if either
(i)~$a\in\Sigma$, $q\xrightarrow{a}r\in\Delta$, and
$q'\xrightarrow{a}r'\in\Delta'$; or (ii)~$a=\varepsilon$ and either
$q'=r'$ and $q\xrightarrow{\varepsilon}r\in\Delta$, or
$q=r$ and $q'\xrightarrow{\varepsilon}r'\in\Delta'$.

\paragraph{Constraint language.}
Let $\mathbb{X}$ be a set of string variables and $\mathbb{I}$ a set of
integer variables.  We use $x_S\in\mathbb{X}$ for
string variables, $x_I\in\mathbb{I}$ for integer variables, $c_I\in\mathbb{Z}$ for an integer
constant, and $a\in\Sigma$ for an alphabet symbol.  The abstract syntax of
the constraint language is given by the following grammar:
\[
\begin{array}{rcl}
  e_S &::=& \varepsilon \mid x_S \mid e_Se_S,\\
  e_I &::=& c_I \mid x_I \mid \mathrm{len}_{x_S} \mid (e_I)
    \mid e_I+e_I \mid c_I\cdot e_I,\\
  \alpha_I &::=& e_I\leq e_I,\\
  L &::=& \varnothing \mid \{\varepsilon\} \mid \{a\}
    \mid L\cup L \mid L\cdot L \mid L^*,\\
  \alpha_S &::=& e_S\approx e_S \mid x_S\in L,\\
  \varphi &::=& \top \mid \bot \mid \alpha_I \mid \alpha_S
    \mid (\varphi) \mid \neg\varphi
    \mid \varphi\land\varphi
    \mid \varphi\lor\varphi.
\end{array}
\]
The expressions generated by $L$ denote exactly
the regular languages over $\Sigma$, where $\cup$, $\cdot$, and ${}^*$ denote union,
concatenation and Kleene star, respectively.
For a string term $e_S$, its \emph{term size} $|e_S|$ is the number of variable
occurrences in $e_S$.
We treat a word $w$ as syntactic sugar for a fresh string variable
$x_S$ together with the singleton-language constraint $x_S\in\{w\}$.  A
\emph{string constraint} is either a \emph{word equation} of the form
$e_S\approx e_S$ or a \emph{regular constraint} of the form $x_S\in L$, where
$L$ is a regular language.
An equation is called a
\emph{binding} if it has the form $x_S\approx e_S$ or
$e_S\approx x_S$.  For a formula $\varphi$, the set
$\Var(\varphi)\subseteq\mathbb{X}$ denotes the set of all string variables in
$\varphi$.
For formulas $\varphi$ and $\psi$, let $Z$ be the set of all variables, of
either sort, that occur in $\varphi$ but not in $\psi$.  We call $\varphi$ an
\emph{existential extension} of $\psi$ if
$\psi\Longleftrightarrow\exists Z\,\varphi$.

\paragraph{Normalized positive cubes.}
We call a conjunction of atoms a
\emph{positive cube}.  Without loss of generality, we identify every positive
cube $\varphi$ with an equivalent normalized form
$\Eq_\varphi\land\Reg_\varphi\land\Len_\varphi$.
Here, $\Eq_\varphi$ contains all equations in $\varphi$ except trivial
identities, and
$\Len_\varphi$ contains all length constraints in $\varphi$.  The regular part
$\Reg_\varphi$ contains exactly one atom $x_S\in L_{x_S}$ for every
$x_S\in\Var(\Eq_\varphi)$.  It is obtained by intersecting all regular
constraints on the same variable and assigning $\Sigma^*$ to every variable
in $\Var(\Eq_\varphi)$ for which no regular constraint is given.  We identify
each conjunction with the set of its
conjuncts and omit the subscript $\varphi$ when it is clear from the context.

\paragraph{Regular assignments.}
We regard the normalized regular part $\Reg_\varphi$ as a
\emph{regular-language assignment}, or \emph{regular assignment} for short,
and write $\Reg_\varphi(x)=L_x$ whenever $x\in L_x$ occurs in
$\Reg_\varphi$.  We define a \emph{string assignment} over a set
$X\subseteq\mathbb{X}$ to be a function $\nu\colon X\to\Sigma^*$.
We view a regular assignment $\Reg$ with domain $X$ as representing the set
$[\![\Reg]\!]\coloneqq
\{\nu\mid\forall x\in X,\nu(x)\in\Reg(x)\}$
of candidate string assignments.  We call $\Reg$ \emph{feasible} if
$[\![\Reg]\!]\neq\varnothing$; equivalently,
$\Reg(x)\neq\varnothing$ for every $x$ in its domain. 
We extend $\nu$ and $\Reg$ homomorphically to string terms, with
$\nu(\varepsilon)=\varepsilon$, $\Reg(\varepsilon)=\{\varepsilon\}$,
$\nu(st)=\nu(s)\nu(t)$, and $\Reg(st)=\Reg(s)\cdot\Reg(t)$.
For every $\nu\in[\![\Reg]\!]$ and every string term $t$ over $X$, we have
$\nu(t)\in\Reg(t)$. A regular assignment $\Reg'$ \emph{refines} $\Reg$ if they have the same
domain $X$ and $\Reg'(x)\subseteq\Reg(x)$ for every $x\in X$.  It follows
that $[\![\Reg']\!]\subseteq[\![\Reg]\!]$; thus, refinement can only
eliminate candidate string assignments.

\paragraph{Stability characterization.}
Given string terms \(s\) and \(t\) and a regular assignment $\Reg$, we write
\(s\supseteq_{\Reg} t\) if \(\Reg(s)\supseteq\Reg(t)\), and
\(s=_\Reg t\) if \(\Reg(s)=\Reg(t)\).
We call these an \emph{inclusion predicate} and an \emph{equality predicate},
respectively; predicates of either form are called \emph{stability predicates}.
We write \(p(s,t)\) for a stability predicate \(p\) with sides \(s\) and \(t\),
and define its underlying word equation by \(\eq(p)\coloneqq s\approx t\).
We extend \(\eq\) to a set \(P\) of stability predicates by defining
\(\eq(P)\coloneqq \{\eq(p)\mid p\in P\}\).
Conversely, for a word equation \(e=(s\approx t)\), we define the corresponding
equality predicate by \(\pred_{\Reg}(e)\coloneqq (s=_\Reg t)\), and extend \(\pred_{\Reg}\) to a set \(\Eq\) of word equations by defining
\(\pred_{\Reg}(\Eq)\coloneqq\{\pred_{\Reg}(e)\mid e\in \Eq\}\). We omit the subscript \(\Reg\) when it is clear from the context.
For a system of equations \(\Eq\), the procedure \(\mathtt{incl}\) from
\cite{blahoudek2023word} constructs a set $V$ of inclusion predicates that
contains, for each equation \(s\approx t\in\Eq\), at least one of the two
inclusions \(s\supseteq t\) and \(t\supseteq s\).
Define the dependency relation
\(
E=\{(s\supseteq t,s'\supseteq t')\in
  V\times V\mid
  \Var(t)\cap\Var(s')\ne\varnothing\}
\).
The directed graph \((V,E)\) is called the \emph{inclusion graph} of \(\Eq\).  A directed
cycle in this graph represents a \emph{cyclic dependency}. We say that \(\Eq\) is
\emph{chain-free} if \(\mathtt{incl}(\Eq)\) is acyclic, or equivalently, if $V$
contain no pair of opposite inclusions.
We say that a regular assignment \(\Reg\) is \emph{stable} for \(V\)
if every predicate in \(V\) holds with respect to \(\Reg\).  By \cite[Theorem~3]{blahoudek2023word},
\(\Eq\land\Reg\) is satisfiable iff there exists a feasible refinement
\(\Reg'\) of \(\Reg\) that is stable for \(V\).

\section{Motivating Example}
\label{sec:motivation}

Consider the positive cube $xy\approx yx\land x\in\texttt{.*a.*}\land y\in\texttt{b+}$, in which $x\in\texttt{.*a.*}$ means that $x$ must
contain at least one occurrence of the character \texttt{a}, and $y\in\texttt{b+}$ means that $y$ is a
nonempty sequence of the character \texttt{b}.
It is unsatisfiable: $xy\approx yx$ implies that $x$ and $y$ are
powers of a common string $z$.  Since $y\in\texttt{b+}$, neither $z$ nor
$x$ can contain \texttt{a}, contradicting $x\in\texttt{.*a.*}$.
Despite this short proof of unsatisfiability, the instance is challenging for
several prominent string solvers including Z3-Noodler 1.6.1,
cvc5 1.3.4, and OSTRICH 2.1.  Under a 120-second CPU-time limit and an 8~GB
memory limit, each solver either returns $\unknown$ or times out on this
instance.

By the stability characterization, 
Z3-Noodler~\cite{chen2024z3} decides this instance by searching for a
feasible refinement $\Reg'$ of $\Reg$ such that
$\Reg'(xy)=\Reg'(yx)$, or by showing that no such refinement exists.  For this instance, the stabilization procedure works with the two directional
inclusion predicates $yx\supseteq xy$ and $xy\supseteq yx$.
Each state carries a regular assignment $\Reg$ and a FIFO worklist
$\mathsf{F}$ of inclusion predicates whose stability remains to be checked.
When $\mathsf{F}$ is nonempty, the procedure removes its first predicate
$s\supseteq t$ and checks it under the current regular assignment $\Reg$.
If the predicate is satisfied, the procedure proceeds to the next one;
Otherwise, the procedure applies the refinement construction of
\cite{blahoudek2023word}: it intersects $\Reg(t)$ with $\Reg(s)$ and
enumerates the feasible factorizations of the intersection along the variable
occurrences of $t$, yielding finitely many refinements
$\{\Reg_i\}_{i=1}^k$ of $\Reg$.
Each $\Reg_i$ is then explored as separate branches. Semantically, these refinements restrict the candidate string assignments
to $C_\supseteq\coloneqq\{\nu\in[\![\Reg]\!]\mid \nu(t)\in\Reg(s)\}$,
in the sense that $C_\supseteq=\bigcup_{i=1}^{k}[\![\Reg_i]\!]$.

Crucially, $C_\supseteq$ is defined with respect to the pre-refinement language $\Reg(s)$.
When $s$ and $t$ share variables, restricting the
variables occurring in $t$ may simultaneously shrink $\Reg_i(s)$. Hence,
although the refinement ensures $\Reg_i(t)\subseteq\Reg(s)$, it does not
necessarily ensure $\Reg_i(t)\subseteq\Reg_i(s)$. The predicate
$s\supseteq t$ may therefore become unstable again and must
re-enter the worklist.
Table~\ref{tab:motivating-refinements}  traces how this behavior unfolds in
the motivating example.

\begin{table}[h]
  \centering
  \setlength{\tabcolsep}{5pt}
  \renewcommand{\arraystretch}{1.3}
  \caption{Successive refinement states generated by the stabilization procedure of Z3-Noodler~\cite{chen2024z3}.}
  \label{tab:motivating-refinements}
  \resizebox{.9\textwidth}{!}{%
    \begin{tabular}{cccccc}
      \toprule
      \# & \multicolumn{1}{c}{$\Reg(x)$} & \multicolumn{1}{c}{$\Reg(y)$} & Predicates to process & \multicolumn{1}{c}{Target constraint} \\
      \midrule
      0 & \texttt{.*a.*} & \texttt{b+}
        & ${yx \supseteq xy},\; xy \supseteq yx$ & --- \\
      1 & \texttt{b+.*a.*} & \texttt{b+}
        & ${xy \supseteq yx},\; yx \supseteq xy$
        & $\Reg(xy) \subseteq \text{\texttt{b+.*a.*}}$ \\
      2 & \texttt{b+.*a.*b+} & \texttt{b+}
        & ${yx \supseteq xy},\; xy \supseteq yx$
        & $\Reg(yx) \subseteq \text{\texttt{b+.*a.*b+}}$ \\
      3 & \texttt{bb+.*a.*b+} & \texttt{b+}
        & ${xy \supseteq yx},\; yx \supseteq xy$
        & $\Reg(xy) \subseteq \text{\texttt{bb+.*a.*b+}}$ \\
      4 & \texttt{bb+.*a.*b+b} & \texttt{b+}
        & ${yx \supseteq xy},\; xy \supseteq yx$
        & $\Reg(yx) \subseteq \text{\texttt{bb+.*a.*b+b}}$ \\
      $\hdots$ & $\hdots$ & $\hdots$ & $\hdots$ & \\
      \bottomrule
    \end{tabular}%
  }
\end{table}

Initially, the procedure processes $yx\supseteq xy$.   It refines $xy$ against
the current language $\Reg(yx)=\texttt{b+.*a.*}$, shrinking $\Reg(x)$ to
$\texttt{b+.*a.*}$ in state~1. 
Since $x$ also occurs in $yx$, this refinement
simultaneously shrinks $\Reg(yx)$. Consequently, the predicate $yx\supseteq xy$ is still not guaranteed to hold and is
appended back to the worklist. The procedure then processes $xy\supseteq yx$ analogously,
producing state~2 and
invalidating $xy\supseteq yx$ in turn.
As shown in the table, the states exhibit a two-step recurrence: after every
two refinements, the worklist returns to the same predicate order, while the
minimum number of leading and trailing $\texttt{b}$s required by $\Reg(x)$
increases by one.
The two predicates are thus repeatedly refined and re-enqueued, generating an
infinite sequence of states without reaching a stable assignment
or exhausting the search. 

The repeated refinements above reveal a limitation of the generic refinement
rule: it reasons only about the current regular languages, without exploiting
the additional structure imposed by cyclic variable dependencies. For example,
when processing $yx\supseteq xy$, the set of solutions satisfying the underlying
equation $xy\approx yx$ is $C^\star\coloneqq
\{\nu\in[\![\Reg]\!]\mid \nu(xy)=\nu(yx)\}$, while the generic refinement
rule only restricts the candidate solution space to $C_{\supseteq}\coloneqq
\{\nu\in[\![\Reg]\!]\mid \nu(xy)\in\Reg(yx)\}$.
Our loop reasoning instead exploits the equation structure to obtain an \emph{exact}
decomposition while remaining expressible using only word equations and regular
constraints supported by the existing constraint language. 
Since $y\in\texttt{b+}$ implies $y\not\approx\varepsilon$,
Lemma~\ref{lem:loop-reasoning-soundness} in Sect.\ref{sec:cyclic-reasoning} shows that
$xy\approx yx\land \Reg$ is equivalent to $\exists x_1\ \exists x_2\ \exists z\ (\psi\land\Reg')$, where
\[
\psi\coloneqq\bigl(x\approx x_1x_2\land y\approx x_2z\land x_2z\approx zx_2),\quad \Reg'\coloneqq\Reg\land x_1\in\Reg(y)^*\land z\in\Sigma^+\bigr.
\]
Applying this transformation to the motivating example yields state~1 in
Table~\ref{tab:example-fixed}, where $x\approx x_1x_2$ in the decomposition is
represented by the one-sided stability obligation $x\supseteq x_1x_2$.
The resulting regular constraints expose the conflict after two ordinary
stabilization steps.

\begin{table}[h]
  \centering
  \caption{Stabilization-based proof of unsatisfiability for the motivating example.}
  \label{tab:example-fixed}
  \setlength{\tabcolsep}{5pt}
  \renewcommand{\arraystretch}{1.3}
  \resizebox{.9\textwidth}{!}{%
    \begin{tabular}{c
      >{\raggedright\arraybackslash}m{.3\linewidth}
      >{\raggedright\arraybackslash}m{.26\linewidth}
      c}
      \toprule
      \# & \multicolumn{1}{c}{$\Reg$} & Predicates to process & \multicolumn{1}{c}{Target constraint} \\
      \midrule
      0 & $x\in \texttt{.*a.*},\;y\in \texttt{b+}$
        & ${xy = yx}$ & --- \\
      1 & $x\in \texttt{.*a.*},\;y\in \texttt{b+},\allowbreak x_1\in\texttt{b*},x_2\in\texttt{.*},\;z\in\texttt{.+}$
        & ${x\supseteq x_1x_2}, y= x_2z,\allowbreak x_2z= zx_2, x_1y= yx_1$
        & --- \\
      2 & $y\in \texttt{b+},\;x_1\in\texttt{b*},\allowbreak x_2\in\texttt{.*a.*},\;z\in\texttt{.+}$
        & ${y= x_2z},x_2z= zx_2,\allowbreak x_1y= yx_1$
        & $\Reg(x_1x_2) \subseteq \texttt{.*a.*}$ \\
      3 & $x_1\in\texttt{.*},\;x_2\in\varnothing,\allowbreak z\in\texttt{b+}$
        & $x_2z= zx_2,x_1y= yx_1$
        & $\Reg(x_2z) \subseteq \texttt{b+}$ \\
      \bottomrule
    \end{tabular}%
  }
\end{table}

The example above shows how cyclic variable dependencies can lead to repeated
refinement in the generic stabilization procedure, and how this behavior can
be mitigated through specialized reasoning. To assess whether this phenomenon
extends beyond the motivating example, we constructed a diagnostic benchmark
of 64 analogous instances by combining 16 equation templates with 4 regular-constraint
configurations. Under the experimental setup described in
Sect.~\ref{sec:evaluation}, Z3-Noodler, cvc5, and OSTRICH left 52, 44, and 52
of these instances unresolved, respectively. These results indicate that the
difficulty exposed by the motivating example is not isolated.
Sect.~\ref{sec:evaluation} describes the benchmark construction in detail and
also evaluates our solver \solvername{} on it.

\section{Architecture}
\label{sec:architecture}
\paragraph{Overview.}
Inspired by the analysis in the previous section, we implement our solver
\solvername{} on
top of Z3-Noodler 1.6.1~\cite{chocholaty2025z3}. \solvername{} adapts the core string
decision procedure in two ways. First, it extends the
existing stabilization framework with equality predicates while preserving the
handling of inclusion predicates, and extends the stabilization procedure with
equality-specific handling (Sect.~\ref{sec:stabilization}).
Second, it incorporates loop reasoning rules, illustrating a way to complement
the generic stabilization procedure with specialized reasoning capabilities
(Sect.~\ref{sec:cyclic-reasoning}).

Overall, our extension leaves the
surrounding solving pipeline of Z3-Noodler unchanged. Given an SMT-LIB input
\cite{barrett2024smtlib}, \solvername{} reuses Z3-Noodler's existing front-end
to prepare the input for SMT solving. Z3's SAT solver
then iteratively provides the string theory solver with conjunctions of string
literals, each corresponding to an assignment of the Boolean abstraction of the input
formula. To solve each conjunction, \solvername{} first considers the
applicable specialized decision procedures of Z3-Noodler; if none yields a
conclusive result, the conjunction is passed to the stabilization-based decision
procedure. \solvername{} also preserves Z3-Noodler's existing support for various
extended string constraints, such as string--integer conversions~\cite{havlena2024cooking}, negated
containment~\cite{chen2025uniform} and transducer constraints~\cite{chocholaty2026string}. Our extension modifies only the main and the most general 
stabilization procedure, which takes as input a positive cube in the normalized form
defined in Sect.~\ref{sec:preliminaries}.\footnote{The extension retains
Z3-Noodler's existing support for transducer constraints
but does not broaden it to the non-chain-free setting. For clarity, we limit our discussion to transducer-free inputs.} Finally, the outcome of the
stabilization procedure is sent back to the SAT solver, which continues the
global SMT search.

\subsection{Search procedure}
\label{sec:search-procedure}
\begin{wrapfigure}{R}{0.52\textwidth}
  \begin{minipage}[t]{\linewidth}
    \begin{algorithm}[H]
      \caption{Stabilization-based decision procedure adapted from~\cite{chocholaty2026string}}
      \label{alg:search-procedure}
      \algrenewcommand{\algorithmicindent}{0.8em}
      \algrenewcommand{\algorithmicthen}{}
      \begin{algorithmic}[1]
        \algtext*{EndIf}
        \algtext*{EndFor}
        \algtext*{EndWhile}
        \Statex \textbf{Input:} A positive cube $\varphi$
        \Statex \textbf{Output:} $\sat/\unsat$
        \State $\mathit{rules} \leftarrow [\textsc{OvWord},\textsc{OvPre},\textsc{OvSuf},$
        \Statex \hspace{\algorithmicindent}$\textsc{Skip},\textsc{EqSkip},\textsc{Subst},\textsc{CombNL},\textsc{Refine}]$
        \State $V \leftarrow$ the vertex set of $\mathtt{incl}(\Eq_\varphi)$\label{line:search-incl}
        \State $\mathsf{N} \leftarrow \mathtt{toposort}(\mathtt{merge}(V))$\label{line:search-merge}
        \State $W \leftarrow \langle(\mathsf{N},\mathsf{N},\Reg_{\varphi},\varnothing,\Var(\Len_{\varphi}))\rangle$\label{line:search-root}
        \While{$W\ne\varnothing$}
          \State $S \leftarrow W.\mathtt{dequeue}()$\label{line:search-pop}
          \State $S \leftarrow \texttt{ReduceEq}(S)$\label{line:search-reduce}
          \If{$\Reg_S$ is $\unsat$}\label{line:search-reg-unsat}
            \State \textbf{continue}\label{line:search-reg-prune}
          \EndIf
          \If{$\mathsf{F}_S=\varnothing$}\label{line:search-terminal}
            \If{$\texttt{LenImg}(\Reg_S\land\SEq_S)\land\Len_\varphi$ is $\sat$}\label{line:search-length}
              \State \Return $\sat$\label{line:search-sat}
            \EndIf
            \State \textbf{continue}\label{line:search-continue}
          \EndIf
          \For{$\mathit{rule} \in \mathit{rules}$}\label{line:search-rules}
            \If{$\mathit{rule}$ is applicable to $S$}
              \State $W.\mathtt{enqueue}(\mathit{rule}(S))$
              \State \textbf{break}\label{line:search-break}
            \EndIf
          \EndFor
        \EndWhile
        \State \Return $\unsat$\label{line:search-unsat}
      \end{algorithmic}
    \end{algorithm}
  \end{minipage}
\end{wrapfigure}
Algorithm~\ref{alg:search-procedure} outlines the main decision procedure of \solvername{}.
\solvername{} performs a breadth-first search in which each state $S$ is represented as a 
tuple $(\mathsf{N}_S,\mathsf{F}_S,\allowbreak\Reg_S,\SEq_S,\LVar_S)$.  Here, $\mathsf{N}_S$ is the set 
of stability predicates that must be satisfied, and $\mathsf{F}_S$ is 
the processing frontier,
consisting of the predicates in $\mathsf{N}_S$ whose stability has not yet been established.
The processing frontier always respects the dependency order of predicates,
ensuring that all inclusions precede equalities and the inclusions follow a topological 
order of the inclusion graph. Consequently, whenever the head of the frontier is an inclusion predicate, it does not depend on any other inclusion predicate remaining in the frontier.
$\Reg_S$ is the current regular assignment, normalized over $\Var(\mathsf{N}_S)$.
$\SEq_S$ records the variable bindings introduced by substitutions, tracking
how the current variables relate to the input variables.  Finally, $\LVar_S$ is the set of
length-aware variables. We write $\Phi(S)\coloneqq\eq(\mathsf{N})\land \Reg \land \SEq$
for the formula represented by $S$.

On input $\varphi$, the procedure first obtains the vertex set $V$ of the
inclusion graph of $\Eq_\varphi$ (line~\ref{line:search-incl}).  Then the procedure merges
each pair of opposite inclusions into an equality predicate and sorts the
remaining inclusions in the topological order (line~\ref{line:search-merge}). 
The search starts from the state $(\mathsf{N},\mathsf{N},\Reg_\varphi,\varnothing,\Var
(\Len_\varphi))$ in which $\mathsf{N}$ forms both the initial stability requirements and 
the initial processing frontier, the regular assignment is initialized to $\Reg_\varphi$,
no substitution bindings have yet been introduced, and the variables occurring in
the input length constraints are marked as length-aware (line~\ref{line:search-root}).
The procedure iteratively dequeues the front state $S$ from $W$~(line~\ref{line:search-pop}).
After a state $S$ is chosen, the procedure first invokes \texttt{ReduceEq} as a
proprocessing step to reduce equality predicates~(line~\ref{line:search-reduce}).
If $\Reg_S$ is infeasible, the state is
immediately pruned (lines~\ref{line:search-reg-unsat}--\ref{line:search-reg-prune}).
Otherwise, an empty frontier $\mathsf{F}_S$ indicates that every stability predicate in
$\mathsf{N}_S$ holds with respect to $\Reg_S$, and that variables in $\Var(\SEq_S)\cup\LVar_S$ 
may occur in $\mathsf N_S$ only on the right-hand sides of inclusion predicates.
For such a state, 
the procedure invokes $\texttt{LenImg}$ from \cite{chocholaty2026string} to construct
a quantifier-free linear arithmetic formula $\texttt{LenImg}(\Reg_S\land\SEq_S)$ that, when conjoined with
$\Len_\varphi$, is equisatisfiable with
$\Phi(S)\land\Len_\varphi$. If
$\texttt{LenImg}(\Reg_S\land\SEq_S)\land\Len_\varphi$ is satisfiable, the procedure
returns $\sat$ (lines~\ref{line:search-length}--\ref{line:search-sat}). Otherwise, the
state $S$ is pruned and the search continues with the next state
in $W$ (line~\ref{line:search-continue}).  If
$\mathsf{F}_S\ne \varnothing$, the procedure scans the rules in their prescribed
order and applies the first rule applicable to $S$, adding the successor states to $W$
(lines~\ref{line:search-rules}--\ref{line:search-break}).  Finally, if the worklist $W$
is exhausted without reaching a satisfiable state, it returns $\unsat$ (line~\ref{line:search-unsat}).

\subsection{Equality-Aware Stabilization}
\label{sec:stabilization}
\paragraph{Core inference rules.}
Figure~\ref{fig:adapted-rules} presents the core stabilization rules used in
Algorithm~\ref{alg:search-procedure}. The presentation follows
\cite{chocholaty2026string}, but is simplified to omit machinery not needed in
our setting. A rule written as $\textsc{Rule}:\dfrac{\mathcal{S}}{\{S_1,\ldots,S_k\}}\;\varphi$
applies to a state $S$ matching the \textit{source pattern} $\mathcal{S}$ when
the \textit{side condition} $\varphi$ holds, and produces
the \textit{successors} $\textsc{Rule}(S)\coloneqq\{S_1,\ldots,S_k\}$.  In
particular, the frontier component in every source pattern here has the form
$p::\mathsf{F}$: the rule processes the stability predicate $p$, and the
remaining frontier predicates are bound to $\mathsf{F}$
in the side conditions and conclusions. When updating the frontier, the union operator
$\cup$ is understood as an ordered insertion:
newly added inclusion predicates are prepended
in topological order, while equality predicates
are appended to the end.  Duplicate predicates
are not inserted.

Overall, \solvername{} inherits
the rules \textsc{Skip} and \textsc{CombNL}, introduces a new
rule \textsc{EqSkip}, and generalizes \textsc{Subst} and \textsc{Refine} 
to equality predicates. The equality-specific cases are the only
changes to the rules, and equality
predicates can arise only from a pair of inclusions merged at the root or
from processing a state that already contains an equality predicate. For chain-free
inputs, the initial state contains only inclusion predicates, so the rules behave
exactly as in the original procedure.

\begin{figure}[tb]
  \centering
  {\setlength{\fboxsep}{3pt}%
  \fbox{%
  \begin{minipage}{0.96\textwidth}
  \setlength{\textwidth}{\linewidth}
  \setlength{\abovedisplayskip}{2pt}
  \setlength{\belowdisplayskip}{2pt}
  \centering
  \fitcondition{
      \textsc{Skip}\;
      \dfrac{(\mathsf{N},s\supseteq t::\mathsf{F},\Reg,\SEq, \LVar)}
      {(\mathsf{N},\mathsf{F},\Reg,\SEq,\LVar)}
      \quad \Var(s)\cap\LVar=\varnothing,\; s\supseteq_{\Reg}t
  }\par\vspace{1.2ex}

  \fitcondition{
      \textsc{EqSkip}\;
      \dfrac{(\mathsf{N},s=t::\mathsf{F},\Reg,\SEq, \LVar)}
      {(\mathsf{N},\mathsf{F},\Reg,\SEq,\LVar)}
      \quad \Var(st)\cap\LVar=\varnothing,\; s=_{\Reg}t
  }\par\vspace{1.2ex}

  \fitcondition{
      \textsc{Subst}\;
      \dfrac{(\mathsf{N}\uplus\{p\},p::\mathsf{F},\Reg,\SEq,\LVar)}
      {(\sigma(\mathsf{N}),\sigma(\mathsf{F}\cup\mathsf{F}'),\Reg,
        \sigma(\SEq)\cup\{x\approx s\},\LVar)}
      \quad\varphi_{\textsc{Subst}}
  }\par\vspace{-0.2ex}
  \fitcondition{
    \begin{aligned}
      \varphi_{\textsc{Subst}}\overset{\mathrm{def}}{\Longleftrightarrow}\;
        & eq(p)\text{ is a binding }x\approx s\text{ or }s\approx x,\quad x\notin\Var(s),\\
        & \sigma=\{x\mapsto s\},\quad x\in\LVar\Rightarrow\Var(s)\subseteq\LVar,\quad x\supseteq_\Reg s,\\
        & \mathsf{F}'=\{s\supseteq t\in\mathsf{N}\mid x\in\Var(s) \} \cup \{s=t\in\mathsf{N}\mid x\in\Var(st)\}
    \end{aligned}
  }\par\vspace{1.2ex}

  \fitcondition{
      \textsc{CombNL}\;
      \dfrac{(\mathsf{N}\uplus\{uvw\supseteq t\},uvw\supseteq t::\mathsf{F},\Reg,\SEq, \LVar) }
      {(\mathsf{N}\cup\{uxw\supseteq t,v\supseteq x\},\mathsf{F}\cup\{uxw\supseteq t\},
        \Reg\cup\{x\mapsto\Reg(v)\},\SEq,\LVar)}
      \quad\varphi_{\textsc{CombNL}}
  }\par\vspace{-0.2ex}
  \fitcondition{
      \varphi_{\textsc{CombNL}}\overset{\mathrm{def}}{\Longleftrightarrow}
      |v|>1,\quad
      \Var(v)\cap\LVar=\varnothing
  }\par\vspace{1.2ex}

  \fitcondition{
      \textsc{Refine}\;
      \dfrac{(\mathsf{N}\uplus\{p\},p::\mathsf{F},\Reg,\SEq, \LVar)}
      {\{(\mathsf{N}\cup\texttt{in}_i\cup\texttt{out}_i,
          \mathsf{F}\cup\texttt{in}_i\cup\texttt{out}_i,
          \Reg_i,\SEq,\LVar\cup\LVar_i)\}_{i=1}^k}
      \quad\varphi_{\textsc{Refine}}
  }\par\vspace{-0.2ex}
  \fitcondition{
      \begin{aligned}
      \varphi_{\textsc{Refine}}\overset{\mathrm{def}}{\Longleftrightarrow}
        {}&\texttt{Noodlify}(p, \Reg, \LVar) 
        =\{(\Reg_i,\texttt{in}_i,\texttt{out}_i)\}_{i=1}^k,\\[-0.4ex]
      &\LVar_i=
        \bigcup_{\substack{x\in\LVar\\q\in\texttt{in}_i\cup\texttt{out}_i\\}}
        \{y\mid x,y\in\Var(q)\}
      \end{aligned}
  }\par
  \end{minipage}%
  }}
  \caption{The core inference rules of Algorithm~\ref{alg:search-procedure}.}
  \label{fig:adapted-rules}
\end{figure}

Rules \textsc{Skip} and \textsc{EqSkip} discharge stability predicates that
already hold under the current regular assignment when no further
concatenation elimination is required for length reasoning.  The inherited
\textsc{Skip} rule only requires the left-hand side $s$ to contain no
length-aware variable, exactly as in the original procedure.  Equality
predicates are symmetric, so \textsc{EqSkip} requires
$\Var(st)\cap\LVar=\varnothing$.

Rule \textsc{Subst} consumes a binding whose underlying equation is
$x\approx s$ (up to symmetry) and records the exact binding in $\SEq$ while
substituting $s$ for $x$ in the remaining stability obligations.  The
condition $x\notin\Var(s)$ ensures that
$\sigma=\{x\mapsto s\}$ is a proper elimination substitution.  The condition
$x\supseteq_{\Reg}s$ guarantees that replacing $x$ by $s$ is compatible with
the current regular restriction on $x$.  If $x$ is length-aware, all variables
introduced in its place must remain length-aware, which explains the condition $x\in\LVar\Rightarrow\Var(s)\subseteq\LVar$. The reactivation set $\mathsf{F}'$
is asymmetric for inclusions and symmetric
for equalities.  For each $(u\supseteq v)\in\mathsf{N}$, replacing $x$ by the more
precise term $s$ can invalidate this inclusion only when
$x$ occurs in the left-hand side $u$.  Therefore, only
such inclusions are returned to the frontier.  For an equality $u=v$, a change
on either side can destroy language equality, so every equality containing
$x$ is returned to the frontier.

The rule order in Algorithm~\ref{alg:search-procedure} is important here.
For a one-way inclusion binding without length-aware variables,
\textsc{Skip} is applied as soon as that inclusion is stable, exactly as in
the original procedure.  For an equality binding, however,
\textsc{EqSkip} is applicable only after both directions are stable.  Hence,
when $x\supseteq_{\Reg}s$ but $s\not\supseteq_{\Reg}x$,
\textsc{Subst} can eliminate $x$ while the exact equation
$x\approx s$ is retained in $\SEq$.  This is the equality-aware substitution
step that allows dependencies distributed over several equations to be
propagated by successive substitutions.

\paragraph{Equality-aware refinement.}
Rule \textsc{CombNL} is unchanged and applies only to inclusion
predicates; hence, it never merges variable occurrences in an
equality predicate.
Rule \textsc{Refine} invokes \texttt{Noodlify}
to refine the current regular assignment $\Reg$ with respect to
a selected stability predicate $p$. For inclusion predicates,
\texttt{Noodlify} follows the original refinement procedure
of~\cite{blahoudek2023word}. For a equality predicate
$p\coloneqq(s=t)$, it instead applies the concatenation-eliminating
noodlification of~\cite{chen2023solving}. Since variable occurrences
on both sides remain separate,
each occurrence receives its own binding equation, with fresh
auxiliary variables representing common fragments identified by
the alignment. The resulting binding equations are retained as
explicit equality predicates in
$\texttt{in}_i\cup\texttt{out}_i$, allowing subsequent applications
of \textsc{Subst} to propagate these bindings. Unlike the
original inclusion-based treatment, which may refine
$s\supseteq t$ and $t\supseteq s$ independently when both
directions are present, the equality-aware procedure constructs
a joint alignment for both directions, avoiding redundant
decompositions. Moreover, occurrences
of distinct variables may overlap in an alignment, and fresh
auxiliary variables make their common fragments explicit.
Substituting the resulting bindings into other equations can
transform dependencies among distinct variables into a local
self-overlap of an auxiliary variable recognizable
by local reasoning rules.

\begin{example}
\label{ex:expose}
Suppose concatenation-eliminating noodlification of an equality introduces
two bindings $z=x$ and $xu=y$, where $x$ is a fresh auxiliary variable. If
the current state also contains an equality $sy=zt$, substituting these
bindings yields $sxu=xt$. Thus, the overlap between the distinct variables
$z$ and $y$, represented by $x$, becomes a self-overlap of $x$ after
substitution. Longer dependency
cycles can be handled analogously by successive refinement and substitutions.
\end{example}

\paragraph{Equality reduction.}
Stabilization on non-chain-free inputs may generate new equality predicates whose two
sides share variables, creating new reduction opportunities during the
search.  Therefore, we add a reduction phase \texttt{ReduceEq} between
stabilization steps, exhaustively applying reduction rules defined below until none of
them is applicable. In particular, every equality produced by \textsc{Subst},
\textsc{Refine}, or an overlap transformation is reduced before it can be
selected as the next stability predicate.

To exploit length information during reduction, we define the
\emph{symbolic length} of a string term $s$ relative to the current regular
assignment $\Reg$ as
\[
  \texttt{SymLen}(s)=
  \begin{cases}
    0, & \text{if }s=\varepsilon,\\
    |w|, & \text{if }s=x\text{ and }\Reg(x)=\{w\},\\
    \mathrm{len}_x, & \text{if }s=x\text{ and }\Reg(x)\text{ is not a singleton},\\
    \texttt{SymLen}(x)+\texttt{SymLen}(s'), & \text{if }s=xs'.
  \end{cases}
\]
For string terms $s$ and $t$, define their \emph{symbolic-length difference}
by $\Delta(s,t)\coloneqq\texttt{SymLen}(s)-\texttt{SymLen}(t)$.
We write
$\Delta(s,t)\mathord{\downarrow}\coloneqq c+\sum_{x\in\Var(st)}a_x\mathrm{len}_x$ for the result
of expanding $\Delta(s,t)$ and collecting like terms, where $c\in\mathbb{Z}$
and $a_x\in\mathbb{Z}$ are the constant term and the coefficient
of $\mathrm{len}_x$, respectively.

\textsc{LenConflict} closes a branch if the constant term has the same
sign as every nonzero coefficient of the symbolic-length difference.
Since string lengths are nonnegative, the difference cannot be zero,
contradicting the equality.
\[
\begin{gathered}
 \textsc{LenConflict}\;
 \dfrac{(\mathsf{N},s= t::\mathsf{F},\Reg,\SEq,\LVar)}
 {\varnothing}\;
 \varphi_{\textsc{LenConflict}}
 \\[0.6ex]
 \begin{aligned}
 \varphi_{\textsc{LenConflict}}
 \overset{\mathrm{def}}{\Longleftrightarrow}{}&
   \Delta(s,t)\mathord{\downarrow}=c+\sum_{x\in\Var(st)}a_x\mathrm{len}_x,\\[-0.4ex]
 &c\ne0,\quad
   \text{$c$ and all nonzero coefficients $a_x$ have the same sign}.
 \end{aligned}
\end{gathered}
\]

When the constant term is zero and all nonzero coefficients have the same
sign, equality of the symbolic lengths forces every variable with a
nonzero coefficient to have zero length.  \textsc{LenZero} captures this
consequence by equating their concatenation with $\varepsilon$.
\[
\begin{gathered}
 \textsc{LenZero}\;
 \dfrac{(\mathsf{N},s= t::\mathsf{F},\Reg,\SEq,\LVar)}
 {\left(\mathsf{N}\cup\{x_1\cdots x_k=\varepsilon\},
   \mathsf{F}\cup\{s= t,x_1\cdots x_k=\varepsilon\},
   \Reg,\SEq,\LVar\right)}\;
 \varphi_{\textsc{LenZero}}
 \\[0.6ex]
 \begin{aligned}
 \varphi_{\textsc{LenZero}}
 \overset{\mathrm{def}}{\Longleftrightarrow}{}&
   \Delta(s,t)\mathord{\downarrow}=\sum_{x\in\Var(st)}a_x\mathrm{len}_x,\\[-0.4ex]
 &\text{all nonzero coefficients $a_x$ have the same sign},\\[-0.4ex]
 &\{x\in\Var(st)\mid a_x\ne0\}
   =\{x_i\mid 1\leq i\leq k\},\quad k>0,\\[-0.4ex]
 &(x_1\cdots x_k=\varepsilon)\notin\mathsf{N}.
 \end{aligned}
\end{gathered}
\]

\textsc{PropEps} propagates an equality $s=\varepsilon$ by introducing
$x=\varepsilon$ for each variable $x$ occuring in $s$.  These equalities
can subsequently be handled by \textsc{Subst}, which eliminates the
corresponding variables and records their bindings in $\SEq$ when applicable.
\[
 \textsc{PropEps}\;
 \dfrac{(\mathsf{N}\uplus\{uxv=\varepsilon\},\mathsf{F},
   \Reg,\SEq,\LVar)}
 {\left(\mathsf{N}\cup\{uxv=\varepsilon,x=\varepsilon\},
   \mathsf{F}\cup\{x=\varepsilon\},
   \Reg,\SEq,\LVar\right)}\;
 (x=\varepsilon)\notin\mathsf{N}
\]

\textsc{RemoveTrivial} removes a trivial equality $s=s$
from both the constraint set and the processing frontier.
\[
 \textsc{RemoveTrivial}\;
 \dfrac{(\mathsf{N}\uplus\{s=s\},s=s::\mathsf{F},\Reg,\SEq,\LVar)}
 {\left(\mathsf{N},\mathsf{F},
   \Reg,\SEq,\LVar\right)}\;
\]

Finally, \textsc{SplitEq} decomposes $st=uv$ into $s=u$ and
$t=v$ when either the prefixes $s,u$ or the suffixes $t,v$ have equal
symbolic lengths.  We allow a split point to fall inside a fixed word $w$
by factoring it as $w=w_1w_2$ so that the resulting parts have matching
symbolic lengths.
\[
\begin{gathered}
 \textsc{SplitEq}\;
 \dfrac{(\mathsf{N}\uplus\{st=uv\},st=uv::\mathsf{F},\Reg,\SEq,\LVar)}
 {\left(\mathsf{N}\cup\{s=u,t=v\},\mathsf{F}\cup\{s=u,t=v\},
   \Reg,\SEq,\LVar\right)}\;
 \varphi_{\textsc{SplitEq}}
 \\[0.6ex]
 \begin{aligned}
 \varphi_{\textsc{SplitEq}}\overset{\mathrm{def}}{\Longleftrightarrow}{}&
   \Delta(s,u)\mathord{\downarrow}=0\text{ or }\Delta(t,v)\mathord{\downarrow}=0,\ |su|>0,\ |tv|>0.
 \end{aligned}
\end{gathered}
\]

\begin{restatable}
  {lemma}{EqualityAware}
\label{lem:equality-aware-soundness}
Let $S$ be a state and $\mathcal R$ an inference rule
applicable to $S$. Suppose that $\mathcal R$ is
an equality-reduction rule or a stabilization
processing an
equality predicate. Then $\bigvee_{S'\in\mathcal R(S)}\Phi(S')$
is an existential extension of $\Phi(S)$.
\end{restatable}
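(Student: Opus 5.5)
Recall $\Phi(S)=\eq(\mathsf N_S)\land\Reg_S\land\SEq_S$. The plan is a case analysis over the rules covered by the lemma. These are the reduction rules \textsc{LenConflict}, \textsc{LenZero}, \textsc{PropEps}, \textsc{RemoveTrivial} and \textsc{SplitEq}, together with \textsc{EqSkip}, \textsc{Subst} and \textsc{Refine} when they process an equality predicate (\textsc{CombNL} never processes one). For each rule I show two directions. First, every model of some $\Phi(S')$, restricted to the variables of $\Phi(S)$, is a model of $\Phi(S)$. Second, every model $\nu$ of $\Phi(S)$ extends to a model of $\Phi(S')$ for some successor $S'$, by assigning the fresh variables. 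Frontier and $\LVar$ components do not occur in $\Phi$, so rules that only change them, such as \textsc{EqSkip}, leave $\Phi$ unchanged and are trivial.

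\textbf{Reduction rules.} For \textsc{RemoveTrivial}, the dropped conjunct $s\approx s$ is valid. For \textsc{PropEps} and \textsc{LenZero}, the added conjunct is implied by the existing ones. For \textsc{PropEps}, $\nu(uxv)=\varepsilon$ forces $\nu(x)=\varepsilon$. For \textsc{LenZero}, first prove an auxiliary fact: if $\nu\models\Reg$, then $|\nu(s)|-|\nu(t)|$ equals $\Delta(s,t)\mathord{\downarrow}$ evaluated at $\mathrm{len}_x\mapsto|\nu(x)|$. This holds because singleton-language variables have fixed length. With a zero constant term and all nonzero coefficients of one sign, equal lengths force every $x_i$ to be empty, so $\nu(x_1\cdots x_k)=\varepsilon$. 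The same fact shows that \textsc{LenConflict} is sound: no model of $\Phi(S)$ exists, so the empty disjunction correctly yields an existential extension of an unsatisfiable formula. For \textsc{SplitEq}, the fact gives $|\nu(s)|=|\nu(u)|$ (or the suffix version). Then $\nu(s)\nu(t)=\nu(u)\nu(v)$ is equivalent to $\nu(s)=\nu(u)\land\nu(t)=\nu(v)$. When the split falls inside a fixed word $w=w_1w_2$, I read $w_1,w_2$ as word sugar, i.e.\ fresh variables with singleton constraints. These are uniquely determined, so the existential extension holds.

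\textbf{\textsc{Subst} on an equality binding $x\approx s$.} $\Phi(S')$ is $\sigma(\eq(\mathsf N))\land\Reg\land\sigma(\SEq)\land x\approx s$. Under the conjunct $x\approx s$, applying $\sigma$ is an equivalence by the usual substitution-of-equals argument, so $\Phi(S')\Leftrightarrow\Phi(S)$. One point needs care: $\Reg$ is supposed to be normalized over $\Var(\mathsf N_{S'})$, and $x$ may disappear from $\mathsf N$. I keep $\Reg(x)$ in the formula via the binding in $\SEq$. Otherwise I use $x\supseteq_\Reg s$, which makes $x\in\Reg(x)$ redundant given $x\approx s$ and $\Reg(\Var(s))$.

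\textbf{Main obstacle: \textsc{Refine} on an equality.} Here I rely on the correctness of the concatenation-eliminating noodlification of~\cite{chen2023solving}. Its guarantee is that $s\approx t\land\Reg$ is equivalent to $\exists$(fresh vars)$\bigvee_i(\eq(\texttt{in}_i\cup\texttt{out}_i)\land\Reg_i)$. That is, every solution of the equation within $\Reg$ factors along some alignment through fresh fragment variables, and conversely every such factorization yields a solution. I would restate this as a lemma and argue it from the construction. The equation side: intersect $\Reg(s)$ with $\Reg(t)$, enumerate noodles, and each solution word has an accepting run that decomposes into segments at the boundaries of both sides. The regular side: the segment languages refine $\Reg$. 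The main subtlety is that the original equality $s=t$ is not itself retained in the successor. I must check that it is implied by the bindings in $\texttt{in}_i\cup\texttt{out}_i$: each side concatenates to the same sequence of fragment variables. This gives soundness. Completeness comes from the run decomposition. The disjunction over $i$ then gives the required existential extension.
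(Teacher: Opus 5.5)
Your proposal is correct and follows the paper's proof essentially case by case. \textsc{EqSkip} leaves $\Phi$ unchanged; \textsc{Subst} follows from substitution of equals together with the side condition $x\supseteq_{\Reg}s$; \textsc{Refine} reduces to the correctness of concatenation-eliminating noodlification; and the reduction rules either add consequences, drop a tautology, replace an equality by an equivalent pair, or close an unsatisfiable branch. Your extra details make explicit what the paper leaves implicit: evaluating $\Delta(s,t)\mathord{\downarrow}$ under a model of $\Reg$ to justify \textsc{LenZero}, \textsc{LenConflict} and \textsc{SplitEq}, and checking that the removed equality $s=t$ is implied by the alignment bindings.
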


\begin{proof}
    See Appendix \ref{sec:proof-equality-aware}.
\end{proof}


\subsection{Loop Reasoning}
\label{sec:cyclic-reasoning}
The equality-aware stabilization procedure described above can expose
cyclic dependencies that are initially distributed across several
equations as local overlaps between variable bindings. Such overlaps
are problematic for ordinary stabilization: repeatedly decomposing the
equations may introduce new residual variables while preserving the
same dependency pattern, leading to an unbounded sequence of refinement
steps. We introduce dedicated loop reasoning rules to handle such
bindings by applying the classical decomposition from~\cite{lyndon1962equation}
and replacing the recurring decomposition with
a finite system of word equations and regular constraints that can
be processed by the existing stabilization procedure.

\begin{definition}[Overlapping bindings]
\label{def:overlapping-bindings}
Two word equations $e_1,e_2$ are said to be \emph{overlapping bindings} for a variable
$x$ with \emph{overlap variable} $x'$ if $\{e_1,e_2\}$ is in one of
the following forms:
\[
  \{x\approx sx',\;x't\approx x\}
  \qquad\text{or}\qquad
  \{x\approx x't,\;sx'\approx x\},
\]
where $|s|,|t|\ge 1$, $x\neq x'$, and
$\{x,x'\}\cap(\Var(s)\cup\Var(t))=\varnothing$.
\end{definition}

\begin{example}
Consider the equality $sxu=xt$ obtained in
Example~\ref{ex:expose}. On an alignment branch where
$|\nu(x)|\geq|\nu(s)|$, concatenation-eliminating
refinement may introduce fresh variables $x'$ and $t'$
and yield the binding equations $x=sx', x=x't', t=t'u$.
Here, $x'$ represents the common fragment of the two
occurrences of $x$. Equations underlying the first two predicates form
overlapping bindings for $x$ with overlap variable $x'$.
Eliminating $x$ from these bindings yields
$sx'=x't'$, revealing a periodic dependency that may
persist under further refinement. Rather than repeatedly
refining this equation, \textsc{OvPre} can process the
overlapping bindings directly when its side condition holds.
\end{example}

Overlapping bindings need not occur in the original input: as illustrated
in the example above,
the
refinement steps can expose them from dependencies distributed
across multiple equations. Rather than allowing this pattern to recur
indefinitely, we introduce specialized loop reasoning rules to
recognize this local
structure and record its periodic consequences before any further refinement.
When $\nu(s)\ne\varepsilon$, the induced equation
$sx'=x't$ exhibits a periodic structure characterized by the following
classical factorization.
\begin{lemma}[{\cite[Lemma~2]{lyndon1962equation}}]
\label{lem:classical-overlap}
If $a,b,c\in\Sigma^*$ satisfy
$ab=bc$ and $a\ne \varepsilon$, then
$a=uv,\ b=(uv)^ku,$ and $c=vu$ for some $u,v\in \Sigma^*$ and $k\ge 0$.
\end{lemma}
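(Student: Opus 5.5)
The plan is induction on $|b|$, with $a\ne\varepsilon$ fixed throughout. We compare the lengths of $b$ and $a$ and use the fact that $ab=bc$ forces $|a|=|c|$.

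\emph{Case $|b|<|a|$.} Since $ab$ and $bc$ are the same word, $b$ is a prefix of $ab$ of length less than $|a|$, so $b$ is a prefix of $a$. Write $a=bv$ and set $u\coloneqq b$. Then $ab=bvb$ and $bc=bc$, so cancelling the common prefix $b$ gives $c=vb=vu$. Thus $a=uv$, $b=(uv)^0u$ and $c=vu$ with $k=0$. This also covers the base case $b=\varepsilon$, where $u=\varepsilon$, $v=a=c$ and $k=0$.

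\emph{Case $|b|\ge|a|$.} The word $ab=bc$ has $a$ as a prefix, and $|b|\ge|a|$, so $a$ is also a prefix of $b$. Write $b=ab'$. Substituting gives $aab'=ab'c$, and cancelling the leading $a$ yields $ab'=b'c$. Because $a\ne\varepsilon$, we have $|b'|<|b|$. The induction hypothesis applied to $(a,b',c)$ gives $u,v$ and $k'\ge0$ with $a=uv$, $b'=(uv)^{k'}u$ and $c=vu$. Then
\[
b=ab'=(uv)^{k'+1}u,
\]
so the claim holds with $k=k'+1$.

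The only delicate point is the prefix argument in each case: every comparison must be read off the single word $ab=bc$, and the hypothesis $a\ne\varepsilon$ is what makes the induction measure strictly decrease. The rest is bookkeeping. An alternative route is to iterate $ab=bc$ to get $a^nb=bc^n$ for all $n$, then take $n$ large enough that $b$ is a prefix of $a^n$. That route leads to the same decomposition, but the induction above is more direct.
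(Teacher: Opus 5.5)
Your proof is correct. Both prefix comparisons are read off the single word $ab=bc$, left cancellation gives $ab'=b'c$, and $a\ne\varepsilon$ guarantees $|b'|=|b|-|a|<|b|$, so strong induction on $|b|$ closes the argument with $k=k'+1$. The paper gives no proof of its own and only cites Lyndon and Sch\"utzenberger; your induction is the standard argument for this classical fact, and nothing is missing.
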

By Lemma~\ref{lem:classical-overlap}, the solution set of $x\approx sx'\land x't\approx x\land \Reg$ is
\[
\fitcondition{
  \begin{aligned}
    &\{\nu\in[\![\Reg]\!]\mid \nu(s)=\nu(t)=\varepsilon, \nu(x)=\nu(x')\}\ \cup \\
    & \quad\{\nu\in[\![\Reg]\!]\mid\exists k\ge 0, u,v\in\Sigma^*, \nu(x)=\nu(sx'), \nu(x')=(uv)^ku,\nu(s)=uv,\nu(t)=vu\}.
  \end{aligned}
}
\]
Introducing fresh variables $x_1$ for $(uv)^k$, $x_2$ for $u$ and $r$ for $v$
makes the factorization explicit.  The solution additionally
satisfies $\nu(x_1)=\nu(s^k)$.  Unlike existing methods~\cite{eisenhofer2025solving,liang2016efficient},
we deliberately avoid encoding this power relation.
Instead, we retain its consequences $sx_1\approx x_1s$ and $x_1\in\Reg(s)^*$.
Although two constraints alone do not suffice to characterize this relation,
together with the remaining factorization equations they still imply the
original overlapping bindings. 
This yields the rule \textsc{OvPre}.
\[
\begin{gathered}
  \fitcondition{
  \textsc{OvPre}\;
  \dfrac{(\mathsf{N}\uplus\{x=sx',x't= x\},
    \langle x=sx',x't= x\rangle::\mathsf{F},\Reg,\SEq,\LVar)}
  {\{(\mathsf{N}\cup\mathsf{G}_{\varepsilon},
    \mathsf{F}\cup\mathsf{G}_{\varepsilon},\Reg,\SEq,\LVar), (\mathsf{N}\cup\mathsf{G},
    \mathsf{F}\cup\mathsf{G},\Reg\cup\{x_1\mapsto \Reg(s)^*\},\SEq,\LVar')\}}
  \;\varphi_{\mathrm{pre}}
  }\\[-0.2ex]
  \fitcondition{
    \begin{aligned}
    &\varphi_{\mathrm{pre}}\overset{\mathrm{def}}{\Longleftrightarrow}
      {}\{x\approx sx',\ x't\approx x\}
        \text{ are overlapping bindings},\ x'\notin\Var(\mathsf{N}).\\[-0.4ex]
    &\text{where }\mathsf{G}_{\varepsilon}
      =\{st=\varepsilon, x=x'\},
      \mathsf{G}=\{x'= x_1x_2,\;
        x=sx',\;
        s= x_2r,\;
        t= rx_2,\;
        sx_1= x_1s\},\\[-0.4ex]
    &\phantom{\text{where }}
      \LVar'=\LVar\cup\{x_1,x_2\}\text{ if }x'\in\LVar,\quad
      \LVar'=\LVar\text{ otherwise}.
    \end{aligned}
  }
\end{gathered}
\]
The rule \textsc{OvSuf} for overlapping bindings in the other form is obtained symmetrically, and its
correctness follows from a similar argument.
\[
\begin{gathered}
  \fitcondition{
  \textsc{OvSuf}\;
  \dfrac{(\mathsf{N}\uplus\{x=x't,sx'= x\},
    \langle x=x't,sx'= x\rangle::\mathsf{F},\Reg,\SEq,\LVar)}
  {\{(\mathsf{N}\cup\mathsf{G}_{\varepsilon},
    \mathsf{F}\cup\mathsf{G}_{\varepsilon},\Reg,\SEq,\LVar), (\mathsf{N}\cup\mathsf{G},
    \mathsf{F}\cup\mathsf{G},\Reg\cup\{x_1\mapsto \Reg(t)^*\},\SEq,\LVar')\}}
  \;\varphi_{\mathrm{suf}}
  }\\[-0.2ex]
  \fitcondition{
    \begin{aligned}
    &\varphi_{\mathrm{suf}}\overset{\mathrm{def}}{\Longleftrightarrow}
      {}\{x\approx x't,\ sx'\approx x\}
        \text{ are overlapping bindings},\ x'\notin\Var(\mathsf{N})\\[-0.4ex]
    &\text{where }\mathsf{G}_{\varepsilon}
      =\{st=\varepsilon, x=x't\},
      \mathsf{G}=\{x'= x_2x_1,\;
        x=x't,\;
        s= x_2r,\;
        t= rx_2,\;
        tx_1= x_1t\},\\[-0.4ex]
    &\phantom{\text{where }}
      \LVar'=\LVar\cup\{x_1,x_2\}\text{ if }x'\in\LVar,\quad
      \LVar'=\LVar\text{ otherwise}.
    \end{aligned}
  }
\end{gathered}
\]

\paragraph{Specialization for fixed words.}
When $s$ or $t$ has a singleton regular language under $\Reg$,
the overlapping bindings admits a more useful factorization for stabilization.
For example, suppose $\Reg(s)=\{w\}$ for some nonempty word $w$.
By Lemma~\ref{lem:classical-overlap}, the solution set of 
$x\approx sx'\land x't\approx x\land \Reg$ is
\[
  \{\nu\in[\![\Reg]\!]\mid\exists k\ge 0, u\in\Sigma^*, \nu(x)=w^{k+1}u, \nu(x')=w^ku,u\nu(t)=wu,|u|<|w|\}
\]
Introducing fresh variables $x_1$ for $w^k$ and $x_2$ for $u$ gives the factorization
\[
  \exists x_1\ \exists x_2\;
  \bigl(
     x\approx wx'
      \land x'\approx x_1x_2
      \land x_2t\approx wx_2
     \land x_1\in\{w\}^*
      \land x_2\in\operatorname{Pre}_{<}(w)
      \land\Reg
  \bigr),
\]
where $  \operatorname{Pre}_{<}(w)
  =\{u\in\Sigma^*\mid
      \exists v\in\Sigma^+,\;w=uv\}$.
Unlike the general factorization, this representation
directly bounds the length of the residual $x_2$ by the regular constraint.
Consequently, the remaining predicate
$x_2t= wx_2$ cannot reproduce the overlap through $x_2$.
We use this factorization as a specialized instantiation
of \textsc{OvPre}, and denote it by \textsc{OvWord}.
\[
\begin{gathered}
  \fitcondition{
  \textsc{OvWord}\;
  \dfrac{(\mathsf{N}\uplus\{x=sx',x't= x\},
    \langle x=sx',x't= x\rangle::\mathsf{F},\Reg,\SEq,\LVar)}
  {(\mathsf{N}\cup\mathsf{G}_w,
    \mathsf{F}\cup\mathsf{G}_w,
    \Reg\cup\{x_1\mapsto\{w\}^*,
      x_2\mapsto\operatorname{Pre}_{<}(w)\},
    \SEq,\LVar')}
  \;\varphi_{\textsc{OvWord}}
  }\\[-0.2ex]
  \fitcondition{
    \begin{aligned}
    &\varphi_{\textsc{OvWord}}\overset{\mathrm{def}}{\Longleftrightarrow}
      {}\{x\approx sx',\ x't\approx x\}
        \text{ are overlapping bindings},\,\\[-0.4ex]
    & \;\quad\qquad\qquad x'\notin\Var(\mathsf{N}),\,
        \Reg(s)=\{w\},\,w\in\Sigma^+.\\[-0.4ex]
    &\text{where }\mathsf{G}_w=\{x'= x_1x_2, x=sx',
        x_2t= wx_2\},\\[-0.4ex]
    &\phantom{\text{where }}
      \LVar'=\LVar\cup\{x_1,x_2\}\text{ if }x'\in\LVar,
      \quad \LVar'=\LVar\text{ otherwise}.
    \end{aligned}
  }
\end{gathered}
\]
The case $\Reg(t)=\{w\}$ admits an analogous specialization
by factoring the residual with respect to $t$.
The two corresponding specializations for the other overlap
form follow symmetrically.
We omit these three analogous rules.
\begin{restatable}
  {lemma}{LoopReasoning}
\label{lem:loop-reasoning-soundness}
Let $\mathcal{R}$ be any of the loop-reasoning rules.
Then $\bigvee_{S'\in\mathcal{R}(S)}\Phi(S')$
is an existential extension of $\Phi(S)$ for every state $S$ to which $\mathcal{R}$
applies.
\end{restatable}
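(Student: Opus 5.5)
Since each rule leaves $\SEq$ and the rest of $\mathsf N$ unchanged and only adds fresh variables, the plan is to reduce the claim to a purely local statement about the two consumed equations. Write $\Phi(S)=\Psi\land x\approx sx'\land x't\approx x\land\Reg$, where $\Psi$ collects $\eq(\mathsf N\setminus\{x=sx',x't=x\})\land\SEq$. Every successor keeps $\Psi$ and adds equations and regular constraints only over $\Var(\Phi(S))$ and the fresh variables $x_1,x_2,r$. Moreover, predicates that are re-added to $\mathsf N$ (such as $x=sx'$) contribute nothing new. Changes to $\LVar$ do not affect $\Phi$. Hence it suffices to prove, for each rule, that
\[ x\approx sx'\land x't\approx x\land\Reg \;\Longleftrightarrow\; \exists Z\,\textstyle\bigvee_i(\eq(\mathsf G_i)\land\Reg_i), \]
where $Z$ is the set of fresh variables. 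The freshness conditions, including $x'\notin\Var(\mathsf N)$, guarantee that the quantifiers commute with $\Psi$.

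For \textsc{OvPre}, I will prove the two directions separately.

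\emph{($\Leftarrow$)} On the $\mathsf G_\varepsilon$ branch, $st\approx\varepsilon$ forces $\nu(s)=\nu(t)=\varepsilon$. Together with $x\approx x'$, both overlapping bindings then hold. On the $\mathsf G$ branch, the solution gives $x\approx sx'$ directly. For the second binding, I compute
\[ x't=x_1x_2rx_2=x_1sx_2=sx_1x_2=sx'=x, \]
using the equation $sx_1\approx x_1s$. The added constraint $x_1\in\Reg(s)^*$ only restricts the fresh variable, so discarding it is harmless.

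\emph{($\Rightarrow$)} Given $\nu$ satisfying the bindings and $\Reg$, I split on whether $\nu(s)=\varepsilon$.
\begin{itemize}
\item If $\nu(s)=\varepsilon$, then $\nu(x)=\nu(x')$, and length comparison in $\nu(x')\nu(t)=\nu(x)$ gives $\nu(t)=\varepsilon$. So $\nu$ satisfies $\mathsf G_\varepsilon$.
\item If $\nu(s)\neq\varepsilon$, then $\nu(s)\nu(x')=\nu(x')\nu(t)$, and Lemma~\ref{lem:classical-overlap} yields $u,v,k$ with $\nu(s)=uv$, $\nu(x')=(uv)^ku$, $\nu(t)=vu$. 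I set $x_1=(uv)^k=\nu(s)^k$, $x_2=u$, $r=v$. Every equation of $\mathsf G$ then holds, with $sx_1\approx x_1s$ holding because $x_1$ is a power of $\nu(s)$.
\end{itemize}
The step I expect to need care is $x_1\in\Reg(s)^*$. This holds because $\nu(s)\in\Reg(s)$, so $\nu(s)^k\in\Reg(s)^*$; with $k=0$, $\varepsilon\in\Reg(s)^*$ as well. The assignments to the original variables are unchanged, so $\Reg$ still holds.

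\textsc{OvSuf} is the mirror image. There I apply Lemma~\ref{lem:classical-overlap} to $x'\nu(t)=\nu(s)x'$, read as $ab=bc$ with $a=\nu(s)$, or equivalently apply the lemma to reversed words. I then set $x'=x_2x_1$ with $x_1=\nu(t)^k$, so that $x_1\in\Reg(t)^*$ and $tx_1\approx x_1t$. The reverse direction computes $sx'=x_2rx_2x_1=x_2tx_1=x_2x_1t=x't=x$. On the $\mathsf G_\varepsilon$ branch, $x\approx x't$ with $t=\varepsilon$ gives $x=x'$, so both bindings hold.

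For \textsc{OvWord}, only one branch is produced, and the reason is that $\Reg(s)=\{w\}$ with $w\neq\varepsilon$ forces $\nu(s)=w\neq\varepsilon$.
\begin{itemize}
\item \emph{($\Rightarrow$)} The lemma gives $\nu(x')=w^ku'$ for a factorization $w=u'v'$. If $v'=\varepsilon$, I absorb $u'=w$ into the power, so that $\nu(x_2)\in\operatorname{Pre}_<(w)$. I then set $x_1=w^k$ and $x_2$ to the remaining proper prefix. The equation $x_2t=wx_2$ follows by cancelling $w^k$ from $x't=sx'$, after noting that $w^k$ commutes with $w$.
\item \emph{($\Leftarrow$)} I compute
\[ x't=x_1x_2t=x_1wx_2=wx_1x_2=sx', \]
using that $x_1\in\{w\}^*$ commutes with $w=\nu(s)$.
\end{itemize}
The three omitted analogous rules are handled by the same argument with $s$ and $t$ or prefix and suffix roles swapped. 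The main obstacle is keeping the bookkeeping precise: $\mathsf G$ contains $\Reg$-fresh variables only, and the case $k=0$ or $u=\varepsilon$ does not break the regular constraints on the fresh variables.
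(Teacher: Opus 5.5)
Your proposal is correct and follows essentially the same route as the paper's proof: the same case split on whether $\nu(s)=\varepsilon$, the same witnesses $\nu(x_1)=(uv)^k$, $\nu(x_2)=u$, $\nu(r)=v$ from Lemma~\ref{lem:classical-overlap}, the same reverse computation $x't=x_1x_2rx_2=x_1sx_2=sx_1x_2=sx'=x$, and symmetry for \textsc{OvSuf} and the fixed-word variants. Two minor remarks: pulling $\exists x_1x_2r$ past $\Psi$ needs only that $x_1,x_2,r$ are fresh, so the condition $x'\notin\Var(\mathsf N)$ is not needed for that step; and your absorption of $u'=w$ into the power when $v'=\varepsilon$ in \textsc{OvWord} makes explicit a detail the paper glosses over.
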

\begin{proof}
See Appendix \ref{sec:proof-loop-reasoning}.
\end{proof}

\subsection{Correctness Guarantees}
Finally, we show that the extended decision procedure preserves the
completeness and termination properties on chain-free inputs, and
is sound on arbitrary inputs.

\begin{theorem}[Termination and correctness for chain-free inputs]
\label{thm:chain-free-guarantees}
For every chain-free input $\varphi$, the presented decision procedure
terminates and returns $\sat$ if and only if $\varphi$ is satisfiable.
\end{theorem}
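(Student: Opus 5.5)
The proof reduces the claim to the known guarantees of the original stabilization procedure of Z3-Noodler, in the form given in \cite{chocholaty2026string,chen2023solving,blahoudek2023word}. That procedure terminates on chain-free inputs and is sound and complete there. The strategy is to show that, on a chain-free input, Algorithm~\ref{alg:search-procedure} performs exactly the same search as the original procedure.

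\textbf{Step 1: the root contains no equality predicates.} Let $\varphi$ be chain-free. By definition, $V=\mathtt{incl}(\Eq_\varphi)$ contains no pair of opposite inclusions. Hence $\mathtt{merge}(V)=V$ on line~\ref{line:search-merge}, and the root frontier is a topological ordering of an acyclic set of inclusion predicates. This matches the initial state of the original procedure.

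\textbf{Step 2: equality predicates never arise.} I will argue by induction over the search that every reachable state contains only inclusion predicates, so the equality-specific machinery is never triggered. The checks are as follows.
\begin{itemize}
\item \texttt{ReduceEq}, \textsc{EqSkip}, \textsc{LenConflict}, \textsc{LenZero}, \textsc{PropEps}, \textsc{RemoveTrivial} and \textsc{SplitEq} all require an equality predicate in $\mathsf{N}$ or at the head of $\mathsf{F}$. They are therefore inapplicable, and \texttt{ReduceEq} acts as the identity.
\item \textsc{OvWord}, \textsc{OvPre} and \textsc{OvSuf} need two equality predicates at the head of the frontier. They are also inapplicable.
\item \textsc{Skip} and \textsc{CombNL} are the inherited rules and introduce only inclusions.
\item \textsc{Subst} on an inclusion binding produces only inclusions in $\mathsf{F}'$, since the equality part of $\mathsf{F}'$ is empty. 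It coincides with the original rule.
\item \textsc{Refine} on an inclusion calls the original \texttt{Noodlify} of \cite{blahoudek2023word}, whose $\texttt{in}_i,\texttt{out}_i$ are inclusions.
\end{itemize}
The main obstacle is this last point. One must check that the inclusion-only \texttt{Noodlify}, together with the length-aware concatenation-elimination inherited from \cite{chen2023solving}, never emits predicates that the frontier-ordering convention treats as equalities. One must also check that acyclicity of the inclusion graph is preserved, so that ordered insertion reproduces the original topological processing order. I would argue this by pointing out that the inherited rules and their invariants are verbatim those of \cite{chocholaty2026string}, where chain-freeness is shown to be an invariant.

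\textbf{Step 3: transfer the guarantees.} With the invariant of Step~2, each application of Algorithm~\ref{alg:search-procedure} corresponds rule by rule to a step of the original procedure, using the same rule order restricted to inclusion rules, and the terminal test via \texttt{LenImg} is unchanged. Consequently the explored search tree is identical to the original one. Termination on chain-free inputs then follows from the termination theorem of the original procedure.

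For correctness in the $\sat$ direction, each successor disjunction is an existential extension of its parent, as guaranteed by the original procedure's soundness lemmas. Moreover, the terminal check is equisatisfiable with $\Phi(S)\land\Len_\varphi$. Together these show that returning $\sat$ implies $\varphi$ is satisfiable.

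Conversely, suppose $\varphi$ is satisfiable. Because the successor relation preserves solutions up to existential extension, some branch always carries a solution. Since the search is finite, that branch reaches a terminal state whose length formula is satisfiable; this is the completeness argument of \cite{chen2023solving,chocholaty2026string}, relying on the stability characterization \cite[Theorem~3]{blahoudek2023word}. Hence the procedure returns $\sat$ if and only if $\varphi$ is satisfiable.
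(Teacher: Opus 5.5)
Your proposal is correct and takes the same route as the paper's sketch. The chain of reasoning is: $\mathtt{merge}$ introduces no equalities on a chain-free input, and the adapted rules never create equalities from an inclusion-only state, so \texttt{ReduceEq}, \textsc{EqSkip} and the loop-reasoning rules are never triggered; the procedure therefore coincides with the original one and inherits its termination and correctness guarantees. Your rule-by-rule check and your explicit treatment of the two correctness directions spell out what the sketch leaves implicit, and, like the paper, you take preservation of acyclicity and the inclusion-only output of \texttt{Noodlify} from the original procedure rather than reproving them.
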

\begin{proof}[Sketch]
By the construction of $\mathtt{incl}$, chain-free inputs
yield no pairs of opposite inclusions, so
$\mathtt{merge}$ introduces no equality predicates.
Since the adapted stabilization rules preserve the
original processing of inclusion predicates and cannot
introduce equalities from an inclusion-only state,
the equality-specific and loop-reasoning rules never
apply. The procedure therefore coincides with the
original stabilization procedure on this fragment,
and the claim follows directly from its termination
and correctness guarantees~\cite{chocholaty2026string}.
\end{proof}

\begin{restatable}[Soundness for general inputs]
    {theorem}{GeneralSoundness}
\label{thm:general-soundness}
For any input $\varphi$, if the presented procedure returns
$\sat$ (respectively, $\unsat$), then $\varphi$ is satisfiable
(respectively, unsatisfiable).
\end{restatable}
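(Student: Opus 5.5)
We argue via a search invariant: at every point of the run of Algorithm~\ref{alg:search-procedure}, the input $\varphi$ is equisatisfiable with $\Len_\varphi\land\bigvee_{S\in W\cup D}\Phi(S)$, where $D$ collects the states that were dequeued and closed off by the $\sat$ check. It is convenient to phrase this as: every model of $\varphi$ extends, on fresh variables, to a model of $\Phi(S)\land\Len_\varphi$ for some state $S$ that is either still pending in $W$ or was closed without being pruned. Conversely, every model of $\Phi(S)\land\Len_\varphi$, for any state $S$ ever generated, restricts to a model of $\varphi$. Soundness of $\sat$ uses the second direction. Soundness of $\unsat$ uses the first direction together with the fact that every closed state was pruned correctly.

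\textbf{Root state.} By the stability characterization, $\Eq_\varphi\land\Reg_\varphi$ has the same models as $\eq(\mathsf{N})\land\Reg_\varphi$. This holds because $\mathtt{incl}$ keeps at least one inclusion per equation, and $\mathtt{merge}$ only combines opposite inclusions of one equation into a single predicate with the same underlying equation. With $\SEq=\varnothing$, the root therefore satisfies $\Phi(S_0)\land\Len_\varphi\Leftrightarrow\varphi$.

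\textbf{Each transition preserves the invariant.} For equality-reduction rules (applied inside \texttt{ReduceEq}) and for stabilization steps on equality predicates, we invoke Lemma~\ref{lem:equality-aware-soundness}. For the loop-reasoning rules we invoke Lemma~\ref{lem:loop-reasoning-soundness}. In both cases $\bigvee_{S'}\Phi(S')$ is an existential extension of $\Phi(S)$. Since the fresh variables do not occur in $\Len_\varphi$, conjoining $\Len_\varphi$ preserves this, which gives both directions at once. For the inherited rules applied to inclusion predicates (\textsc{Skip}, \textsc{CombNL}, \textsc{Subst}, \textsc{Refine} via the original \texttt{Noodlify}), we appeal to the corresponding existential-extension lemmas of~\cite{chocholaty2026string,blahoudek2023word}. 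These rules are unchanged except that the state may additionally contain equality predicates, which are simply carried along in $\mathsf{N}$, or reinserted into the frontier in the case of \textsc{Subst}. Reinsertion only affects the frontier and not $\Phi$, and applying $\sigma$ to equalities together with recording $x\approx s$ in $\SEq$ is still an equivalence. \textsc{Skip}/\textsc{EqSkip} leave $\Phi$ unchanged. Pruning when $\Reg_S$ is infeasible is sound because then $\Phi(S)$ has no model.

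\textbf{Terminal states.} Suppose $\mathsf{F}_S=\varnothing$. If \texttt{LenImg} reports $\sat$, the property of \texttt{LenImg} cited from~\cite{chocholaty2026string} gives a model of $\Phi(S)\land\Len_\varphi$. By the invariant, this model restricts to a model of $\varphi$. If \texttt{LenImg} reports unsat, equisatisfiability shows $\Phi(S)\land\Len_\varphi$ is unsatisfiable, so dropping $S$ preserves the first direction of the invariant. When $W$ empties, no state can carry a model of $\varphi$, so $\varphi$ is unsatisfiable.

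\textbf{Main obstacle.} The delicate step is verifying that the precondition of \texttt{LenImg} holds at terminal states in the general, non-chain-free setting. That precondition requires every predicate in $\mathsf{N}_S$ to be stable under $\Reg_S$, and variables of $\SEq_S\cup\LVar_S$ to occur only on right-hand sides of inclusions. The plan is to prove, as an auxiliary invariant, that a predicate leaves the frontier only through \textsc{Skip}, \textsc{EqSkip}, \textsc{Subst}, \textsc{RemoveTrivial}, or by being consumed. We would also show that every rule that shrinks $\Reg$ or substitutes a variable reinserts all predicates whose stability it might break. This relies on the reactivation sets $\mathsf{F}'$ of \textsc{Subst} and on the fact that \textsc{Refine} restricts only fresh or right-hand-side variables. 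The side conditions on $\LVar$ in \textsc{EqSkip} ensure that no length-aware variable survives inside a discharged equality. With this invariant established, \texttt{LenImg} applies and the argument above closes.
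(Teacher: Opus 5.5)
Your overall structure matches the paper's. You use a solution-preservation invariant over the worklist: the root equivalence, Lemmas~\ref{lem:equality-aware-soundness} and~\ref{lem:loop-reasoning-soundness} for the new rules, and the cited results for the inherited ones. You combine it with \texttt{LenImg} equisatisfiability at terminal states.

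The gap is in the step you flag as the main obstacle. Your auxiliary invariant (discharged predicates are stable and length-separated) is essentially the paper's (I2), but it cannot be proved by induction on its own.

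\begin{itemize}
\item Your claim that every rule shrinking $\Reg$ reinserts whatever it might break is false for \textsc{Refine} on an inclusion $s\supseteq t$. Stabilizing noodlification shrinks $\Reg$ on $\Var(t)$ and reinserts nothing.
\item The fact that it ``restricts only right-hand-side variables'' does not help by itself. A variable of $t$ might occur on the left-hand side of an already discharged inclusion, or inside a discharged equality.
\end{itemize}

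The paper excludes this with a separate structural invariant (I1). Expanding each equality into two opposite inclusions yields a genuine inclusion graph for $\eq(\mathsf{N}_S)$ with no opposite inclusions. Hence:
\begin{itemize}
\item no cycle passes through an inclusion;
\item no two inclusions share a left-hand-side variable;
\item variables of equalities never occur on left-hand sides of inclusions.
\end{itemize}
Together with the ordered frontier insertion, this keeps the frontier topologically ordered with inclusions before equalities. That is exactly why \textsc{Refine} cannot destabilize processed predicates. Preserving (I1) is itself a nontrivial induction over \textsc{Subst} (both orientations), \textsc{CombNL}, \textsc{Refine}, and the equality and loop rules. The terminal-state argument also uses (I1), to conclude that $\mathsf{N}_S$ restricts the non-length-aware variables no further than $\Reg_S$.

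Your statement of the \texttt{LenImg} precondition also omits that $\SEq_S$ must be in solved form. Each binding must have a distinct left-hand variable occurring neither in $\mathsf{N}_S$ nor on the right-hand side of any binding; this is the paper's (I3). Without it, $\SEq_S$ could impose real constraints, and the equisatisfiability of $\Phi(S)\land\Len_\varphi$ with $\texttt{LenImg}(\Reg_S\land\SEq_S)\land\Len_\varphi$ would not follow. It holds because \textsc{Subst} applies $\sigma$ to $\mathsf{N}$ and $\SEq$ before adding $x\approx s$, with $x\notin\Var(s)$.

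You would likewise need that variables added to $\LVar$ by \textsc{Refine} and the loop rules are fresh, so they never occur in discharged predicates.

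A minor point: at the root, the equivalence is just $\eq(\mathsf{N})=\Eq_\varphi$ up to orientation. The stability characterization plays no role there.
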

\begin{proof}
See Appendix \ref{sec:proof-general-soundness}.
\end{proof}

\section{Evaluation}
\label{sec:evaluation}
\paragraph{Experimental setup.}
We implemented \solvername{} by extending Z3-Noodler
1.6.1 (based on Z3 4.16.0) with the proposed techniques and made the source code publicly
available on Github\footnote{\url{https://github.com/Wu-02/z3-noodler}}. We compared
it against the unmodified Z3-Noodler, and other state-of-the-art string
solvers cvc5 1.3.4 and OSTRICH 2.1.
To assess the contribution of the proposed loop reasoning,
we also evaluated a variant of \solvername{} with loop reasoning
disabled, labeled ``no loop reasoning'' in the tables.
An instance is considered solved only if the solver returns
$\sat$ or $\unsat$. For each solver, we report the number of solved
instances and the total CPU time spent on these instances.
We also cross-checked the satisfiability results reported by all evaluated
solvers and found no conflicts.
All experiments were conducted in
an Ubuntu 25.04 Docker container on a machine equipped with an Intel Xeon
Gold 6240 CPU at 2.60 GHz. Each solver run was limited to one CPU
core, 8 GiB of memory, and a timeout of 120 seconds.

\subsection{Diagnostic evaluation}
To expose the limitations of existing solvers in handling
cyclic variable dependencies and demonstrate the benefits
of our extensions, we construct a diagnostic benchmark
of systematically generated, syntactically simple word
equations combined with regular constraints.
We first enumerate equations with exactly three components on
each side, drawn from the variables $x,y$ and the letters
\texttt{a}, \texttt{b}.
We require each side to contain both a variable and a letter,
with one variable occurring exactly three times in total
and the other exactly once.
These restrictions yield non-quadratic equations in which the
variable occurring three times necessarily appears on both sides,
inducing a cyclic dependency.
We discard equations whose sides begin or end with the same
component or distinct letters to exclude immediate
cancellation and boundary mismatches.
To eliminate syntactic duplicates, we retain one
representative per equivalence class under variable
renaming, swapping the two letters, and exchanging
the sides, yielding 16 equation templates.
For each template, we generate four instances by adding
the regular constraints $\operatorname{contains}(x,p)$ and
$\operatorname{contains}(y,q)$, where
$p,q\in\{\texttt{a},\texttt{b}\}$ and
$\operatorname{contains}(z,c)$ requires $z$ to contain
the character $c$.
For example, one such instance is
$x\texttt{a}x\approx\texttt{a}xy
\land\operatorname{contains}(x,\texttt{a})
\land\operatorname{contains}(y,\texttt{b})$.
The resulting 64 instances comprise 12 satisfiable
and 52 unsatisfiable cases.

\begin{table}[t]
  \centering
  \caption{Numbers of solved and unresolved instances and
    total CPU time (in seconds) for each solver on the
    64 diagnostic instances (12 SAT and 52 UNSAT).
    CPU time is summed over solved instances only.}
  \label{tab:crafted-results}
  \setlength{\tabcolsep}{6pt}
  \renewcommand{\arraystretch}{1.12}
  \begin{tabular}{cccccc}
    \toprule
    Solver & SAT (12) & UNSAT (52) & solved & unresolved & time (s) \\
    \midrule
    \solvername{} & \textbf{12} & \textbf{52} & \textbf{64} & \textbf{0} & 1.784 \\
    no loop reasoning & 12 & 20 & 32 & 32 & 0.763 \\
    Z3-Noodler & 12 & 0 & 12 & 52 & 0.340 \\
    cvc5 & 12 & 8 & 20 & 44 & 0.354 \\
    OSTRICH2 & 12 & 0 & 12 & 52 & 0.385 \\
    \bottomrule
  \end{tabular}
\end{table}

Table~\ref{tab:crafted-results} compares the solvers
on the 64 diagnostic instances.
All solvers solve the 12 satisfiable instances, but
their performance differs substantially on the 52
unsatisfiable instances.
Neither Z3-Noodler nor OSTRICH2 solves any of the
unsatisfiable instances, while cvc5 solves only 8, and
all unresolved runs hit either the time or memory limit.
In contrast, \solvername{} is able to solve all 64 instances in less
than 2 seconds.
These results show that even syntactically small
string constraints involving cyclic variable dependencies can
pose significant challenges to
existing solvers, while the proposed extensions
effectively address such cases.
Furthermore, disabling loop reasoning reduces the number of
instances solved by \solvername{} from 64 to 32 with only 20 unsatisfiable
instances solved, demonstrating the substantial contribution
of the loop reasoning rules to this improvement.

\subsection{SMT-LIB benchmark evaluation}
To complement the diagnostic evaluation, we assess the impact of the
proposed extensions on solver coverage and runtime on established
benchmarks, where Z3-Noodler already achieves high coverage.
We evaluate \solvername{} on eight
benchmark sets selected from the SMT-LIB'25 release~\cite{smtlib2025benchmarks}.
Following the benchmark classification used in previous evaluations
of Z3-Noodler~\cite{chocholaty2025z3},
we focus on the equation-oriented benchmarks, which primarily
contain word equations and length constraints, with some regular
constraints and other string operations. These benchmarks are
particularly relevant to our extensions, whereas
benchmark sets dominated by regular constraints or extended
string operations primarily exercise components inherited
unchanged from Z3-Noodler and are thus excluded from this evaluation.

\begin{table}[t]
  \centering
  \caption{Numbers of solved instances and total CPU time
    (in seconds) for each solver on eight equation-oriented
    benchmark sets from SMT-LIB'25. Numbers in parentheses indicate the
    total number of instances in each set; CPU time is
    summed over solved instances only.}
  \label{tab:main-results}
  \setlength{\tabcolsep}{5pt}
  \renewcommand{\arraystretch}{1.12}
  \sbox{\mainresultstablebox}{%
    \begin{tabular}{c *{5}{cc}}
      \toprule
      & \multicolumn{2}{c}{\resultheader{\textbf{Slog}}{1,976}}
      & \multicolumn{2}{c}{\resultheader{\textbf{Slent}}{1,128}}
      & \multicolumn{2}{c}{\resultheader{\textbf{Omark}}{17}}
      & \multicolumn{2}{c}{\resultheader{\textbf{Woorpje}}{809}}
      & \multicolumn{2}{c}{\resultheader{\textbf{Kepler}}{587}} \\
      \cmidrule(lr){2-3}\cmidrule(lr){4-5}\cmidrule(lr){6-7}\cmidrule(lr){8-9}\cmidrule(lr){10-11}
      & \multicolumn{1}{c}{solved} & \multicolumn{1}{c}{time (s)}
      & \multicolumn{1}{c}{solved} & \multicolumn{1}{c}{time (s)}
      & \multicolumn{1}{c}{solved} & \multicolumn{1}{c}{time (s)}
      & \multicolumn{1}{c}{solved} & \multicolumn{1}{c}{time (s)}
      & \multicolumn{1}{c}{solved} & \multicolumn{1}{c}{time (s)} \\
      \midrule
      \solvername{} & \textbf{1,976} & 49.341
        & \textbf{1,128} & 44.283 & \textbf{13} & 0.349 & \textbf{795} & 226.097 & \textbf{587} & 28.275 \\
      no loop reasoning & \textbf{1,976} & 50.103
        & \textbf{1,128} & 44.631 & 11 & 0.265 & \textbf{795} & 222.809 & 586 & 24.427 \\
      Z3-Noodler & \textbf{1,976} & 46.930
        & \textbf{1,128} & 55.331 & 10 & 0.260 & 781 & 281.013 & 584 & 23.550 \\
      cvc5 & 1,975 & 34.207 & 1,104 & 1,074.377 & 5 & 1.131 & 755 & 85.974 & 349 & 6.703 \\
      OSTRICH2 & 1,660 & 1,910.877 & 977 & 369.521 & 1 & 0.032 & 694 & 2,001.505 & 187 & 9.689 \\
      \bottomrule
    \end{tabular}%
  }
  \resizebox{\textwidth}{!}{\usebox{\mainresultstablebox}}

  \medskip

  \resizebox{\textwidth}{!}{%
    \begin{tabular*}{\wd\mainresultstablebox}{c @{\extracolsep{\fill}} *{4}{cc}}
      \toprule
      & \multicolumn{2}{c}{\resultheader{\textbf{Webapp}}{681}}
      & \multicolumn{2}{c}{\resultheader{\textbf{Kaluza}}{19,432}}
      & \multicolumn{2}{c}{\resultheader{\textbf{Norn}}{1,027}}
      & \multicolumn{2}{c}{\resultheader{$\boldsymbol{\operatorname{\Sigma}}$}{25,657}} \\
      \cmidrule(lr){2-3}\cmidrule(lr){4-5}\cmidrule(lr){6-7}\cmidrule(lr){8-9}
      & \multicolumn{1}{c}{solved} & \multicolumn{1}{c}{time (s)}
      & \multicolumn{1}{c}{solved} & \multicolumn{1}{c}{time (s)}
      & \multicolumn{1}{c}{solved} & \multicolumn{1}{c}{time (s)}
      & \multicolumn{1}{c}{solved} & \multicolumn{1}{c}{time (s)} \\
      \midrule
      \solvername{} & \textbf{681} & 56.108
        & 19,406 & 792.355 & \textbf{1,027} & 24.814
        & \textbf{25,613} & 1,221.622 \\
      no loop reasoning & \textbf{681} & 55.944
        & 19,406 & 811.699 & \textbf{1,027} & 25.125
        & 25,610 & 1,235.003 \\
      Z3-Noodler & \textbf{681} & 169.977
        & 19,406 & 736.335 & \textbf{1,027} & 24.361
        & 25,593 & 1,337.757 \\
      cvc5 & 589 & 996.609 & \textbf{19,426} & 910.971 & 943 & 459.988
        & 25,146 & 3,569.960 \\
      OSTRICH2 & 484 & 942.676 & 19,041 & 3,931.585 & \textbf{1,027} & 160.372
        & 24,071 & 9,326.257 \\
      \bottomrule
    \end{tabular*}%
  }
\end{table}

Table~\ref{tab:main-results} summarizes the number of solved instances and the
total CPU time over solved instances for each solver grouped
by benchmark sets. Overall, \solvername{} achieves the highest coverage, leaving
only 44 unresolved instances, which is 20 fewer than Z3-Noodler.
Despite solving more instances, \solvername{} takes only
1,221.6 s of total CPU time, compared with 1,337.8 s for Z3-Noodler.
The improvements
are concentrated
in Omark, Woorpje and Kepler, with identical solved counts on the
remaining sets.
\solvername{} also achieves the highest solved count, including ties, on seven
of the eight benchmark sets.
The exception is Kaluza, where cvc5 solves 20 more instances on which
both Z3-Noodler and our solver return \textsf{unknown}.
Disabling loop
reasoning reduces the number of solved instances by only three, suggesting
that the improvements on these benchmarks primarily stem from equality-aware
stabilization rather than loop reasoning.  Together with the diagnostic
evaluation, these
results show that the proposed techniques can address challenging
cyclic dependencies while preserving the competitive performance
on established benchmarks.

\begin{wrapfigure}{r}{0.4\textwidth}
    \centering
    \includegraphics[width=\linewidth]{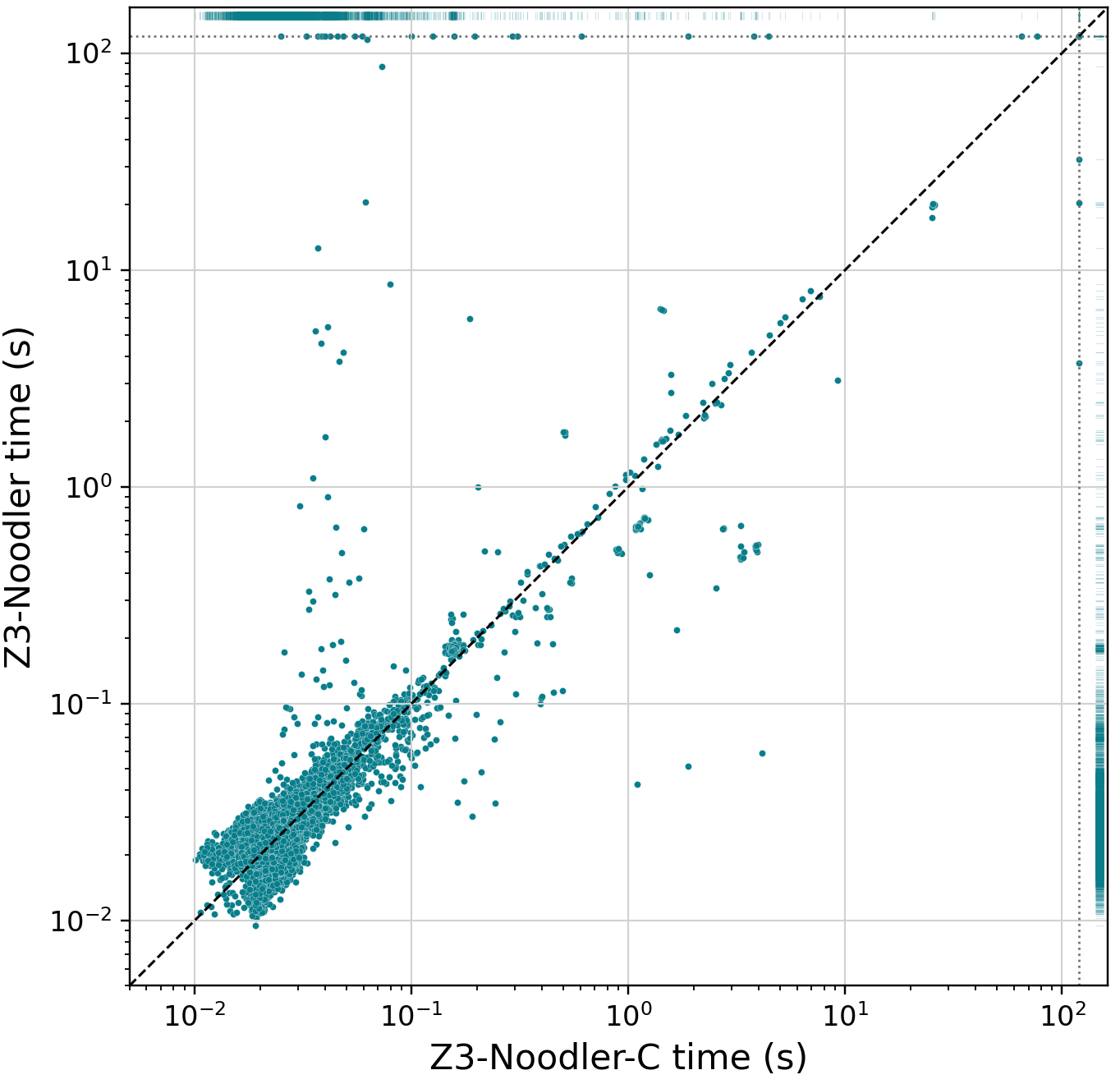}
    \caption{Comparison between \solvername{}
    (x-axis) and Z3-Noodler (y-axis) on the eight SMT-LIB
    benchmark sets. Both axes are logarithmic, with
    unsolved instances assigned 120\,s. Each point represents
    one instance.}
    \label{fig:smtlib-scatter}
\end{wrapfigure}
Figure~\ref{fig:smtlib-scatter} compares \solvername{} 
with the base solver on these benchmark sets. 
A small number of points lie well below the diagonal, since
the more aggressive concatenation-eliminating noodlification of
\solvername{} may introduce additional refinement branches by exposing 
the occurrence boundaries of shared variables that are not length-sensitive,
leading to performance regressions on some
instances that Z3-Noodler can originally solve. However, the overall
impact of this regression is limited, as most points lie close to
the diagonal, indicating that \solvername{} largely
preserves the efficiency of Z3-Noodler on the majority of instances.
Meanwhile, a noticeable number of points lie well above the diagonal, 
indicating substantial speedups on certain instances. More importantly, the 
concentration  of points along the $y=120\,\mathrm{ s}$ line above the 
diagonal further highlights the improved solving capability of \solvername{}, 
especially on instances that Z3-Noodler fails to solve within the time limit.
Overall, these results suggest that our extensions provide significant 
improvements to Z3-Noodler without introducing significant overhead on most benchmarks.

\section{Related Work}

\paragraph{Stabilization-based string solving.}
Stabilization was first introduced in~\cite{blahoudek2023word} for solving
word equations together with  regular constraints by refining regular
assignments until all relevant equation inclusions in the inclusion graph
become stable.  Subsequent work~\cite{chen2023solving} extends it with 
length reasoning by separating length-aware variables during refinement.
Stabilization-based decision procedure forms the basis of
Z3-Noodler~\cite{chen2024z3} and has subsequently been extended with support
for complementary decision procedures and model
generation~\cite{chocholaty2025z3}, string--integer
conversions~\cite{havlena2024cooking}, position
constraints~\cite{chen2025uniform}, and functional finite-state
transducers~\cite{chocholaty2026string}.  On chain-free~\cite{abdulla2019chain}
inputs, the acyclicity of inclusion graph guarantees termination and
completeness of this procedure.

Unlike the original procedure, which processes opposite inclusions with
the same underlying equation separately, our extension processes them
as a single equality predicate, enabling equation-level reduction
and substitution by considering both inclusion directions together. 
Moreover, we reuse concatenation-eliminating
refinement to expose shared variable occurrences in alignment bindings
even when no length reasoning is required.

\paragraph{Cyclic variable dependencies.}
Various approaches to handle cyclic variable dependencies in string constraint
solving have been proposed. cvc4~\cite{liang2016efficient} introduced the
\textsc{F-Loop} rule to breaks certain potentially infinite derivation loops
caused by repeated word-equation splitting, using fresh variables and symbolic Kleene-star constraints. ZIPT~\cite{eisenhofer2025solving} extends
Nielsen transformations with explicit power terms to summarize repeated substitutions. Both 
approaches exploit the classical factorization of overlapping words established by Lyndon 
and Sch\"utzenberger~\cite{lyndon1962equation}. Kepler$_{22}$~\cite{le2018decision} links reduction tree nodes that represent equivalent word-equation systems under variable substitutions
to summarize potentially unbounded reduction sequences.
OSTRICH2~\cite{hague2025ostrich2} breaks cycles by deriving emptiness constraints from length reasoning and eliminating redundant 
equations. Z3-Noodler~\cite{chen2024z3} employs inclusion-based stabilization, which is
particularly effective for chain-free constraints~\cite{abdulla2019chain} but
does not fully exploit the underlying word equations when cyclic dependencies are
present. Our solver extends Z3-Noodler's automata-based stabilization with various word-based reasoning techniques, achieving better overall results in our evaluation.

\section{Conclusion}
We have implemented \solvername{} on top of Z3-Noodler by extending the
stabilization-based decision procedure with equality predicates and specialized
loop reasoning rules. Our evaluation shows that \solvername{} substantially
improves coverage on the diagnostic benchmarks while remaining competitive
on established SMT-LIB benchmarks.
These results suggest that the extensions can considerably strengthen stabilization-based
solving in the presence of cyclic variable dependencies 
while retaining its effectiveness on broader benchmarks.

\bibliographystyle{splncs04}
\bibliography{refs}

@inproceedings{abdulla2019chain,
  title={Chain-free string constraints},
  author={Abdulla, Parosh Aziz and Atig, Mohamed Faouzi and Diep, Bui Phi and Hol{\'\i}k, Luk{\'a}{\v{s}} and Jank{\r{u}}, Petr},
  booktitle={International Symposium on Automated Technology for Verification and Analysis},
  pages={277--293},
  year={2019},
  organization={Springer}
}

@inproceedings{blahoudek2023word,
  title={Word equations in synergy with regular constraints},
  author={Blahoudek, Franti{\v{s}}ek and Chen, Yu-Fang and Chocholat{\`y}, David and Havlena, Vojt{\v{e}}ch and Hol{\'\i}k, Luk{\'a}{\v{s}} and Leng{\'a}l, Ond{\v{r}}ej and S{\'\i}{\v{c}}, Juraj},
  booktitle={International Symposium on Formal Methods},
  pages={403--423},
  year={2023},
  organization={Springer}
}

@inproceedings{chen2024z3,
  title={Z3-noodler: An automata-based string solver},
  author={Chen, Yu-Fang and Chocholat{\`y}, David and Havlena, Vojt{\v{e}}ch and Hol{\'\i}k, Luk{\'a}{\v{s}} and Leng{\'a}l, Ond{\v{r}}ej and S{\'\i}{\v{c}}, Juraj},
  booktitle={International Conference on Tools and Algorithms for the Construction and Analysis of Systems},
  pages={24--33},
  year={2024},
  organization={Springer}
}

@article{chen2023solving,
  title={Solving string constraints with lengths by stabilization},
  author={Chen, Yu-Fang and Chocholat{\`y}, David and Havlena, Vojt{\v{e}}ch and Hol{\'\i}k, Luk{\'a}{\v{s}} and Leng{\'a}l, Ond{\v{r}}ej and S{\'\i}{\v{c}}, Juraj},
  journal={Proceedings of the ACM on Programming Languages},
  volume={7},
  number={OOPSLA2},
  pages={2112--2141},
  year={2023},
  publisher={ACM New York, NY, USA}
}

@inproceedings{chocholaty2025z3,
  title={Z3-noodler 1.3: shepherding decision procedures for strings with model generation},
  author={Chocholat{\`y}, David and Havlena, Vojt{\v{e}}ch and Hol{\'\i}k, Luk{\'a}{\v{s}} and Hrani{\v{c}}ka, Jan and Leng{\'a}l, Ond{\v{r}}ej and S{\'\i}{\v{c}}, Juraj},
  booktitle={International Conference on Tools and Algorithms for the Construction and Analysis of Systems},
  pages={23--44},
  year={2025},
  organization={Springer}
}

@inproceedings{havlena2024cooking,
  title={Cooking string-integer conversions with noodles},
  author={Havlena, Vojt{\v{e}}ch and Hol{\'\i}k, Luk{\'a}{\v{s}} and Leng{\'a}l, Ond{\v{r}}ej and S{\'\i}{\v{c}}, Juraj},
  booktitle={27th International Conference on Theory and Applications of Satisfiability Testing (SAT 2024)},
  pages={14--1},
  year={2024},
  organization={Schloss Dagstuhl--Leibniz-Zentrum f{\"u}r Informatik}
}

@article{liang2016efficient,
  title={An efficient SMT solver for string constraints},
  author={Liang, Tianyi and Reynolds, Andrew and Tsiskaridze, Nestan and Tinelli, Cesare and Barrett, Clark and Deters, Morgan},
  journal={Formal Methods in System Design},
  volume={48},
  number={3},
  pages={206--234},
  year={2016},
  publisher={Springer}
}

@article{lyndon1962equation,
  title={The equation {$a^M= b^N c^P$} in a free group},
  author={Lyndon, Roger C and Sch{\"u}tzenberger, Marcel-Paul and others},
  journal={Michigan Math. J},
  volume={9},
  number={4},
  pages={289--298},
  year={1962}
}

@inproceedings{eisenhofer2025solving,
  title={On Solving String Equations via Powers and Parikh Images},
  author={Eisenhofer, Clemens and Seiser, Theodor and Bj{\o}rner, Nikolaj and Kov{\'a}cs, Laura},
  booktitle={International Conference on Automated Reasoning with Analytic Tableaux and Related Methods},
  pages={120--138},
  year={2025},
  organization={Springer Nature Switzerland Cham}
}

@inproceedings{chocholaty2026string,
  title={String Solving with Stabilization and Transducers},
  author={Chocholat{\`y}, David and Havlena, Vojt{\v{e}}ch and Hol{\'\i}k, Luk{\'a}{\v{s}} and S{\'\i}{\v{c}}, Juraj and {\v{S}}ed{\`y}, Michal},
  booktitle={International Conference on Computer Aided Verification},
  pages={50--74},
  year={2026},
  organization={Springer}
}

@article{chen2025uniform,
  title={A uniform framework for handling position constraints in string solving},
  author={Chen, Yu-Fang and Havlena, Vojt{\v{e}}ch and He{\v{c}}ko, Michal and Hol{\'\i}k, Luk{\'a}{\v{s}} and Leng{\'a}l, Ond{\v{r}}ej},
  journal={Proceedings of the ACM on Programming Languages},
  volume={9},
  number={PLDI},
  pages={550--575},
  year={2025},
  publisher={ACM New York, NY, USA}
}

@inproceedings{saxena2010symbolic,
  title={A symbolic execution framework for javascript},
  author={Saxena, Prateek and Akhawe, Devdatta and Hanna, Steve and Mao, Feng and McCamant, Stephen and Song, Dawn},
  booktitle={2010 IEEE Symposium on Security and Privacy},
  pages={513--528},
  year={2010},
  organization={IEEE}
}

@article{chen2019decision,
  title={Decision procedures for path feasibility of string-manipulating programs with complex operations},
  author={Chen, Taolue and Hague, Matthew and Lin, Anthony W and R{\"u}mmer, Philipp and Wu, Zhilin},
  journal={Proceedings of the ACM on Programming Languages},
  volume={3},
  number={POPL},
  pages={1--30},
  year={2019},
  publisher={ACM New York, NY, USA}
}

@inproceedings{barbosa2022cvc5,
  title={cvc5: A versatile and industrial-strength SMT solver},
  author={Barbosa, Haniel and Barrett, Clark and Brain, Martin and Kremer, Gereon and Lachnitt, Hanna and Mann, Makai and Mohamed, Abdalrhman and Mohamed, Mudathir and Niemetz, Aina and N{\"o}tzli, Andres and others},
  booktitle={International Conference on Tools and Algorithms for the Construction and Analysis of Systems},
  pages={415--442},
  year={2022},
  organization={Springer}
}

@inproceedings{hague2025ostrich2,
  title={OSTRICH2: Solver for complex string constraints},
  author={Hague, Matthew and Hu, Denghang and Jez, Artur and Lin, Anthony W and Markgraf, Oliver and R{\"u}mmer, Philipp and Wu, Zhilin},
  booktitle={Proceedings of the 25th Conference on Formal Methods in Computer-Aided Design--FMCAD},
  pages={145--158},
  year={2025}
}

@inproceedings{le2018decision,
  title={A decision procedure for string logic with quadratic equations, regular expressions and length constraints},
  author={Le, Quang Loc and He, Mengda},
  booktitle={Asian Symposium on Programming Languages and Systems},
  pages={350--372},
  year={2018},
  organization={Springer}
}

@article{makanin1986problem,
  title={The problem of solvability of equations in a free semigroup},
  author={Makanin, Gennady S},
  journal={Journal of Symbolic Logic},
  volume={51},
  number={4},
  year={1986}
}

@article{jez2016recompression,
  title={Recompression: a simple and powerful technique for word equations},
  author={Je{\.z}, Artur},
  journal={Journal of the ACM (JACM)},
  volume={63},
  number={1},
  pages={1--51},
  year={2016},
  publisher={ACM New York, NY, USA}
}

@inproceedings{abdulla2014string,
  title={String constraints for verification},
  author={Abdulla, Parosh Aziz and Atig, Mohamed Faouzi and Chen, Yu-Fang and Hol{\'\i}k, Luk{\'a}{\v{s}} and Rezine, Ahmed and R{\"u}mmer, Philipp and Stenman, Jari},
  booktitle={International Conference on Computer Aided Verification},
  pages={150--166},
  year={2014},
  organization={Springer}
}

@inproceedings{wassermann2007sound,
  title={Sound and precise analysis of web applications for injection vulnerabilities},
  author={Wassermann, Gary and Su, Zhendong},
  booktitle={Proceedings of the 28th ACM SIGPLAN Conference on Programming Language Design and Implementation},
  pages={32--41},
  year={2007}
}

@inproceedings{trinh2014s3,
  title={S3: A symbolic string solver for vulnerability detection in web applications},
  author={Trinh, Minh-Thai and Chu, Duc-Hiep and Jaffar, Joxan},
  booktitle={Proceedings of the 2014 ACM SIGSAC Conference on Computer and Communications Security},
  pages={1232--1243},
  year={2014}
}

@inproceedings{eriksson2023black,
  title={Black ostrich: Web application scanning with string solvers},
  author={Eriksson, Benjamin and Stjerna, Amanda and De Masellis, Riccardo and R{\"u}emmer, Philipp and Sabelfeld, Andrei},
  booktitle={Proceedings of the 2023 ACM SIGSAC Conference on Computer and Communications Security},
  pages={549--563},
  year={2023}
}

@inproceedings{backes2018semantic,
  title={Semantic-based automated reasoning for AWS access policies using SMT},
  author={Backes, John and Bolignano, Pauline and Cook, Byron and Dodge, Catherine and Gacek, Andrew and Luckow, Kasper and Rungta, Neha and Tkachuk, Oksana and Varming, Carsten},
  booktitle={2018 Formal Methods in Computer Aided Design (FMCAD)},
  pages={1--9},
  year={2018},
  organization={IEEE}
}

@inproceedings{rungta2022billion,
  title={A billion SMT queries a day},
  author={Rungta, Neha},
  booktitle={International Conference on Computer Aided Verification},
  pages={3--18},
  year={2022},
  organization={Springer}
}

@misc{smtcomp2024strings,
  author = {{SMT-COMP}},
  title  = {{SMT-COMP} 2024: {QF\_Strings} (Single Query Track) Results},
  year   = {2024},
  url    = {https://smt-comp.github.io/2024/results/qf_strings-single-query/},
  note   = {Results generated on 2024-07-08; accessed on 2026-07-23}
}

@misc{smtcomp2025strings,
  author = {{SMT-COMP}},
  title  = {{SMT-COMP} 2025: {QF\_Strings} (Single Query Track) Results},
  year   = {2025},
  url    = {https://smt-comp.github.io/2025/results/qf_strings-single-query/},
  note   = {Results generated on 2025-08-11; accessed on 2026-07-23}
}

@misc{smtlib2025benchmarks,
  author = {Preiner, Mathias and Schurr, Hans-J{\"o}rg and Barrett, Clark and
            Fontaine, Pascal and Niemetz, Aina and Tinelli, Cesare},
  title = {{SMT-LIB} Release 2025 (Non-Incremental Benchmarks)},
  year = {2025},
  url = {https://zenodo.org/records/16740866},
  note = {Version 2025.08.04; accessed on 2026-07-23}
}

@inproceedings{loring2019sound,
  title={Sound regular expression semantics for dynamic symbolic execution of JavaScript},
  author={Loring, Blake and Mitchell, Duncan and Kinder, Johannes},
  booktitle={Proceedings of the 40th ACM SIGPLAN Conference on Programming Language Design and Implementation},
  pages={425--438},
  year={2019}
}

@article{yu2014automata,
  title={Automata-based symbolic string analysis for vulnerability detection},
  author={Yu, Fang and Alkhalaf, Muath and Bultan, Tevfik and Ibarra, Oscar H},
  journal={Formal Methods in System Design},
  volume={44},
  number={1},
  pages={44--70},
  year={2014},
  publisher={Springer}
}

@inproceedings{li2014symjs,
  title={SymJS: automatic symbolic testing of JavaScript web applications},
  author={Li, Guodong and Andreasen, Esben and Ghosh, Indradeep},
  booktitle={Proceedings of the 22nd ACM SIGSOFT International Symposium on Foundations of Software Engineering},
  pages={449--459},
  year={2014}
}

@inproceedings{de2008z3,
  title={Z3: An efficient SMT solver},
  author={De Moura, Leonardo and Bj{\o}rner, Nikolaj},
  booktitle={International conference on Tools and Algorithms for the Construction and Analysis of Systems},
  pages={337--340},
  year={2008},
  organization={Springer}
}

@inproceedings{bjorner2009path,
  title={Path feasibility analysis for string-manipulating programs},
  author={Bj{\o}rner, Nikolaj and Tillmann, Nikolai and Voronkov, Andrei},
  booktitle={International Conference on Tools and Algorithms for the Construction and Analysis of Systems},
  pages={307--321},
  year={2009},
  organization={Springer}
}

@inproceedings{stanford2021symbolic,
  title={Symbolic boolean derivatives for efficiently solving extended regular expression constraints},
  author={Stanford, Caleb and Veanes, Margus and Bj{\o}rner, Nikolaj},
  booktitle={Proceedings of the 42nd ACM SIGPLAN International Conference on Programming Language Design and Implementation},
  pages={620--635},
  year={2021}
}

@inproceedings{trinh2016progressive,
  title={Progressive reasoning over recursively-defined strings},
  author={Trinh, Minh-Thai and Chu, Duc-Hiep and Jaffar, Joxan},
  booktitle={International Conference on Computer Aided Verification},
  pages={218--240},
  year={2016},
  organization={Springer}
}

@inproceedings{liang2014dpll,
  title={A DPLL (T) theory solver for a theory of strings and regular expressions},
  author={Liang, Tianyi and Reynolds, Andrew and Tinelli, Cesare and Barrett, Clark and Deters, Morgan},
  booktitle={International Conference on Computer Aided Verification},
  pages={646--662},
  year={2014},
  organization={Springer}
}

@techreport{barrett2024smtlib,
  author      = {Clark Barrett and Pascal Fontaine and Cesare Tinelli},
  title       = {The {SMT-LIB} Standard: Version 2.7},
  institution = {University of Iowa},
  year        = {2024},
  type        = {Technical Report},
  note        = {Revision 2025-07-07},
  url         = {https://smt-lib.org/papers/smt-lib-reference-v2.7-r2025-07-07.pdf}
}

\newpage
\appendix
\section{Proof of Lemma \ref{lem:equality-aware-soundness}}
\label{sec:proof-equality-aware}
\EqualityAware*
\begin{proof}
\textsc{EqSkip} only removes a predicate from
$\mathsf{F}$ and leaves $\Phi(S)$ unchanged.
For \textsc{Subst}, let $\sigma=\{x\mapsto s\}$
and let $S'$ be the successor state obtained by
applying the rule to $S$.
Every solution $\nu$ of $\Phi(S)$ satisfies
$\nu(x)=\nu(s)$ and hence also satisfies $\Phi(S')$.
Conversely, every solution $\nu$ of $\Phi(S')$ satisfies
$\Phi(S)$ by extending $\nu(x)=\nu(s)$, since
the side condition $x\supseteq_\Reg s$ ensures that
$\nu(x)$ respects the original regular
constraint $\Reg(x)$.
For \textsc{Refine}, the claim follows
from the correctness of the corresponding noodlification
procedures~\cite{blahoudek2023word,chen2023solving}.
Equality reduction also preserves solutions:
\textsc{LenZero} and \textsc{PropEps} only add logical
consequences of $\Phi(S)$, \textsc{RemoveTrivial}
removes a tautology, and \textsc{SplitEq} replaces
an equality with an equivalent pair under its
side condition. Finally, \textsc{LenConflict}
closes only unsatisfiable branches.
Hence, every rule preserves $\Phi(S)$ across its
successor branches, up to existential quantification
of fresh variables.
\end{proof}

\section{Proof of Lemma \ref{lem:loop-reasoning-soundness}}
\label{sec:proof-loop-reasoning}
\LoopReasoning*

\begin{proof}
For \textsc{OvPre}, let $S_\varepsilon$ and
$S_g$ denote its two successor states.
Consider a solution $\nu$ of $\Phi(S)$ satisfying
the overlapping bindings $x=sx'$ and $x't=x$.
If $\nu(s)=\varepsilon$, cancellation yields
$\nu(t)=\varepsilon$ and $\nu(x)=\nu(x')$,
so $\nu$ satisfies $\Phi(S_\varepsilon)$.
Otherwise, by Lemma~\ref{lem:classical-overlap},
there exist $u,v\in\Sigma^*$ and $k\geq 0$ such that
$
\nu(s)=uv,
\nu(t)=vu,
\nu(x')=(uv)^ku.
$
Extending $\nu$ with
$\nu(x_1)=(uv)^k$, $\nu(x_2)=u$, and $\nu(r)=v$
satisfies $\Phi(S_g)$, including
the regular constraint $x_1\in\Reg(s)^*$.
Conversely, every solution of $\Phi(S_\varepsilon)$
satisfies the original bindings, and for any solution
$\nu$ of $\Phi(S_g)$ we have
$
\nu(x't)=\nu(x_1x_2rx_2)=\nu(x_1sx_2)
=\nu(sx_1x_2)=\nu(sx')=\nu(x).
$
Thus, restricting $\nu$ to the original variables
yields a solution of $\Phi(S)$.
The same claim for \textsc{OvSuf} follows
symmetrically by reversing the words and
interchanging the roles of $s$ and $t$.

For \textsc{OvWord}, let $S_w$ denote its successor
state and consider a solution $\nu$ of $\Phi(S)$.
Since $\Reg(s)=\{w\}$ with
$w\ne\varepsilon$, Lemma~\ref{lem:classical-overlap}
gives $\nu(x')=w^ku$ for some $k\geq 0$ and
$u\in\operatorname{Pre}_{<}(w)$ satisfying
$u\nu(t)=wu$.
Extending $\nu$ with $\nu(x_1)=w^k$ and
$\nu(x_2)=u$ satisfies $\Phi(S_w)$, including
the regular constraints on $x_1$ and $x_2$.
Conversely, for any solution $\nu$ of $\Phi(S_w)$,
we have $
\nu(x't)=\nu(x_1x_2t)=\nu(x_1wx_2)
=\nu(wx_1x_2)=\nu(sx')=\nu(x).
$
Hence, restricting $\nu$ to the original variables
satisfies $\Phi(S)$.
The remaining fixed-word specializations
follow analogously. Therefore, for every loop-reasoning
rule $\mathcal R$, $\bigvee_{S'\in\mathcal R(S)}\Phi(S')$
is an existential extension of $\Phi(S)$.
\end{proof}

\section{Proof of Theorem \ref{thm:general-soundness}}
\label{sec:proof-general-soundness}
\paragraph{State invariants.}
The soundness of the procedure relies on the following invariants
to hold for every reachable state
$S=(\mathsf N_S,\mathsf F_S,\Reg_S,\SEq_S,\LVar_S)$:
\begin{enumerate}
    \setlength{\itemindent}{2em}
    \item[(I1)]
    Expanding each equality predicate in $\mathsf N_S$
    into two opposite inclusion predicates produces the
    vertex set of an inclusion graph for $\eq(\mathsf N_S)$,
    and $\mathsf N_S$ does not contain any pair of
    opposite inclusion predicates.

    \item[(I2)]
    Every predicate in $\mathsf N_S\setminus\mathsf F_S$
    is stable under $\Reg_S$, and every variable in $\Var(N_S\setminus\mathsf F_S)\cap\LVar_S$
    only occurs on the right-hand sides of inclusion predicates in $\mathsf N_S\setminus\mathsf F_S$.

    \item[(I3)]
    Every binding in $\SEq_S$ has a distinct variable on
    its left-hand side that does not occur
    in $\mathsf N_S$ or on the right-hand side of any
    binding in $\SEq_S$.
\end{enumerate}

\begin{lemma}
\label{lem:state-invariants}
Every reachable state $S$ satisfies {\rm (I1)--(I3)}.
\end{lemma}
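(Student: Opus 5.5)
The proof will be an induction on the length of the derivation producing $S$: I first check the root state, then show that \texttt{ReduceEq} and each rule in the $\mathit{rules}$ list preserves (I1)--(I3). Pruning steps create no successors, so they need no treatment.

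For the root state $(\mathsf N,\mathsf N,\Reg_\varphi,\varnothing,\Var(\Len_\varphi))$ the three invariants are immediate. (I1) holds because $\mathsf N$ comes from the vertex set $V=\mathtt{incl}(\Eq_\varphi)$ by merging every pair of opposite inclusions into one equality. Expanding these equalities recovers $V$, which is an inclusion graph for $\eq(\mathsf N)=\Eq_\varphi$, and no opposite pair is left. (I2) holds vacuously since $\mathsf N\setminus\mathsf F=\varnothing$, and (I3) holds vacuously since $\SEq=\varnothing$.

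The inductive step goes rule by rule, using one observation throughout. For (I1) it suffices that for every equation in $\eq(\mathsf N_S)$ at least one direction is present, and that no two opposite inclusions appear. Every rule except \textsc{CombNL} and inclusion-\textsc{Refine} adds only equality predicates, and these always satisfy the ``at least one direction'' requirement. For \textsc{CombNL} and inclusion-\textsc{Refine}, I will appeal to the corresponding invariant of the original procedure~\cite{blahoudek2023word,chocholaty2026string}: the inclusions added there are one-directional and follow the existing orientation, so they cannot form an opposite pair with an existing inclusion. I also have to rule out that they are opposite to an expanded equality. An equality is its own single predicate, so this cannot create an opposite \emph{inclusion} pair.

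For (I2), the argument splits by how each rule treats the frontier.
\begin{itemize}
\item \textsc{Skip} and \textsc{EqSkip} move a predicate out of $\mathsf F$ only after checking that it is stable and satisfies the length-aware side condition.
\item The reduction rules and the loop-reasoning rules only add predicates to $\mathsf F$ or remove them from both $\mathsf N$ and $\mathsf F$, and they do not change $\Reg$ on existing variables. The loop rules extend $\Reg$ only with fresh $x_1,x_2,r$.
\item \textsc{Refine} changes $\Reg$, which can destroy the stability of predicates already in $\mathsf N\setminus\mathsf F$. Here I will reuse the original argument: the topological frontier order guarantees that a refined predicate's variables occur in processed predicates only on right-hand sides. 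For equality predicates, which sit at the end of the frontier, I expect to need that no processed predicate shares those variables on a left side or in an equality. This is the step I expect to be the main obstacle. If it fails as stated, I would weaken the claim to follow the rule's $\mathsf F'$ reactivation semantics, or argue through the $\LVar$ condition.
\item \textsc{Subst} reactivates exactly the predicates whose stability could break: inclusions with $x$ on the left and equalities containing $x$. Predicates with $x$ only on a right-hand side stay stable, because $\Reg(s)\subseteq\Reg(x)$ by the side condition.
\end{itemize}

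For (I3), only \textsc{Subst} changes $\SEq$. It applies $\sigma$ to $\mathsf N$ and to all of $\SEq$ and then adds $x\approx s$. After this, $x$ is gone from $\sigma(\mathsf N)$ and from every right-hand side of $\sigma(\SEq)$, and $x\notin\Var(s)$. The left-hand variable $x$ is distinct from earlier left-hand variables, because by the inductive hypothesis those do not occur in $\mathsf N$, whereas $x$ does.

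The remaining rules introduce only fresh variables and never reintroduce eliminated ones, so they preserve (I3). This holds provided fresh variables are chosen outside $\Var(\SEq)$, which I will state as an implicit freshness convention.
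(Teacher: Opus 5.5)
Your induction does not close, for two reasons.

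First, the version of (I1) you carry (one direction per equation, no opposite inclusion pair) is not preserved. Your claim that every rule other than \textsc{CombNL} and inclusion-\textsc{Refine} only adds equalities overlooks that \textsc{Subst} rewrites every existing inclusion by $\sigma$. Take $\mathsf N=\{y\supseteq x,\ z\supseteq y,\ x=z\}$ with frontier $\langle x=z\rangle$, $\Reg(x)=\Reg(y)=\{\mathtt{a},\mathtt{aa}\}$, $\Reg(z)=\mathtt{a}^*$ and $\SEq=\LVar=\varnothing$. This state satisfies your (I1), (I2) and (I3), and no reduction rule and no rule before \textsc{Subst} applies. \textsc{Subst} with $\sigma=\{z\mapsto x\}$ is legal, since $\Reg(z)\supseteq\Reg(x)$, and it yields the opposite pair $\{y\supseteq x,\ x\supseteq y\}$.

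The paper excludes such states by maintaining a stronger invariant: no cycle of the inclusion graph passes through an inclusion predicate. An opposite pair is such a cycle, and here $(y\supseteq x)\to(x\supseteq z)\to(z\supseteq y)\to(y\supseteq x)$ is one. It also reads into (I1) that variables on an inclusion's left-hand side occur neither on another inclusion's left-hand side nor in any equality. Each rule is then checked against this:
\begin{itemize}
\item new \textsc{Subst} edges correspond to paths through the consumed predicate;
\item the single new \textsc{CombNL} edge $(v\supseteq x)\to(uxw\supseteq t)$ would merge back into a cycle through $uvw\supseteq t$;
\item inclusion-\textsc{Refine} links $s$ and $t$ only through fresh variables;
\item equality-processing rules create no path from an equality into an inclusion.
\end{itemize}
Your (I2) step for inclusion-\textsc{Refine} also silently relies on this acyclicity when it appeals to a topological order of the frontier.

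Second, you leave the equality case of \textsc{Refine} in (I2) open and suggest weakening the claim, which would no longer prove the lemma. The missing observation is that for an equality predicate \texttt{Noodlify} is the concatenation-eliminating variant, which shrinks no existing language. It only assigns languages to fresh variables and adds binding equalities, and every variable it adds to $\LVar$ is fresh. Hence processed predicates, their languages and their length-aware variables are untouched, and (I2) is immediate.

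There are also two smaller omissions. \textsc{CombNL} is missing from your (I2) cases; its new processed predicate $v\supseteq x$ is stable because $\Reg(x)=\Reg(v)$, and has no length-aware left-hand variables since $\Var(v)\cap\LVar=\varnothing$. In (I3) you should note that earlier left-hand variables do not occur in $s$, because $\Var(s)\subseteq\Var(\mathsf N)$, so they do not reappear in $\sigma(\mathsf N)$ or in the new binding.
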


\begin{proof}
We proceed by induction on the length of the derivation.

For the base case, the initial state satisfies
(I1)--(I3) by construction.

For the inductive step, suppose that (I1)--(I3)
hold for a reachable state $S$. We show that they
also hold for every successor state $S'$.
Since \textsc{LenConflict} produces no successor
state, it need not be considered.

\emph{(I1).}
Rules \textsc{Skip} and \textsc{EqSkip} leave
$\mathsf N$ unchanged, while \textsc{RemoveTrivial}
only removes a trivial equality, so they all preserve (I1).

In the following, we omit arguments for the
absence of opposite inclusions by ruling out cycles
involving an inclusion predicate, since any pair
would form such a cycle.
Consider an application of \textsc{CombNL} to
$p=(uvw\supseteq t)$, which replaces $p$ with
$p_1=(uxw\supseteq t)$ and $p_2=(v\supseteq x)$,
where $x$ is fresh. The only new dependency edge
is $p_2\to p_1$.
If this edge formed a cycle, merging
$p_1$ and $p_2$ back into $p$ would yield a
cycle containing $p$ in the original inclusion
graph, contradicting (I1).

For \textsc{Subst} applied to an inclusion
$p=(s\supseteq x)$, (I1) ensures that variables in $s$ occur
on no other inclusion's left-hand side.
Consequently, every new dependency edge introduced
by the substitution corresponds to a path through $p$ in the
original graph, and substituting
$s$ for $x$ cannot introduce repeated
left-hand-side occurrences.
For $p=(x\supseteq s)$, the argument is analogous,
since (I1) ensures that $x$ occurs on no other
inclusion's left-hand side.
Therefore, neither case introduces a cycle
involving an inclusion or two inclusions sharing a variable on their left-hand sides.
Thus, \textsc{Subst} preserves (I1).

For \textsc{Refine} applied to an inclusion
$p=(s\supseteq t)$, each alignment splits $p$
into binding predicates, preserving $s$ on the
left and $t$ on the right, with fresh variables
connecting the two sides.
Any cycle involving these bindings
would therefore induce a cycle through $p$
in the original inclusion graph, contradicting
(I1).

Finally, for rules processing equality predicates, (I1) ensures
that variables occurring in equalities do not appear
on the left-hand sides of inclusions.
These rules introduce only equalities involving existing
equality variables and fresh variables, and any
substitution affects only the right-hand sides of
inclusions.
Thus, no new dependency path from an equality to a
inclusion is introduced, and the
left-hand-side occurrences of
inclusions remain unchanged.
Consequently, no cycle involving a
inclusion is introduced, preserving (I1).

\emph{(I2).}
Rules \textsc{Skip} and \textsc{EqSkip} preserve (I2)
by their side conditions.
Rule \textsc{CombNL} introduces an inclusion
$v\supseteq x$ outside the frontier, which is stable
by $\Reg(x)=\Reg(v)$ and contains no length-aware
variables on its left-hand side.

For \textsc{Subst}, inclusions containing the
substituted variable on their left-hand sides
and equalities containing it on either side
are re-added to \textsf{F}. For the remaining
inclusions, substitution can only restrict their
right-hand sides, preserving stability and
length separation.

For \textsc{Refine}, stabilizing noodlification
only restricts the regular languages of variables
on the right-hand side of an inclusion.
By the topological ordering of inclusions in \textsf{F}, these variables
cannot occur on the left-hand sides of previously
processed inclusions, and no equality has yet
been processed. Hence, stability is preserved.
Concatenation-eliminating noodlification, on the
other hand, only introduces regular assignments
for fresh variables, leaving existing assignments
unchanged. Moreover, newly added length-aware
variables are fresh and thus cannot occur in
previously processed predicates.

All remaining rules leave processed
predicates and their regular assignments unchanged,
introducing only new predicates into the frontier
and possibly fresh length-aware variables.
Therefore, (I2) is preserved.

\emph{(I3).}
Only \textsc{Subst} modifies $\SEq$.
Suppose it applies $\sigma=\{x\mapsto s\}$,
where $x\notin\Var(s)$.
By (I3), the left-hand-side variables of
existing bindings do not occur in $\mathsf N$.
Hence, $x$ is distinct from these variables,
and $s$ contains none of them.
Since $\sigma$ is applied globally to
$\mathsf N$ and $\SEq$ before adding
$x\approx s$, the eliminated variable $x$
occurs neither in the resulting predicates
nor on the right-hand side of any binding.
Thus, \textsc{Subst} preserves (I3).
All other rules leave $\SEq$ unchanged and
introduce no occurrences of eliminated
variables, so they preserve (I3) as well.

Thus, every successor state satisfies (I1)--(I3),
completing the induction.
\end{proof}

\GeneralSoundness*
\begin{proof}
By Lemma \ref{lem:state-invariants},
(I1)--(I3) hold for every reachable state.
Consider a terminal state $S$ with
$\mathsf F_S=\varnothing$.
By (I1) and (I2), $\mathsf N_S$ admits a valid
inclusion-graph representation and is stable
under $\Reg_S$. Moreover, length-aware variables
occur only on the right-hand sides of inclusions,
while (I3) ensures that $\SEq_S$ is in solved form.
The predicates in $\mathsf N_S$ impose no additional
restrictions on the non-length-aware variables
beyond their regular constraints. Consequently,
$\Phi(S)\land\Len_\varphi$ and 
$\texttt{LenImg}(\Reg_S\land\SEq_S)\land\Len_\varphi$
are equisatisfiable.

It remains to relate the terminal states to the
original input. By
Lemmas~\ref{lem:equality-aware-soundness}
and~\ref{lem:loop-reasoning-soundness},
the newly introduced equality-aware and
loop-reasoning rules preserve solutions across
their successor branches, up to existential
quantification of fresh variables.
The inherited rules preserve the same equivalence
by the solution-preservation properties of the
original stabilization procedure
\cite{blahoudek2023word,chocholaty2026string}.
Consequently, at the beginning of every iteration in Algorithm
\ref{alg:search-procedure}, the disjunction
of $\Phi(S)\land\Len_\varphi$ over all states
$S$ in the worklist $W$, with their auxiliary variables
existentially quantified, remains equivalent to
the original formula $\varphi$.
Thus, if the procedure returns $\sat$, then
$\Phi(S)\land\Len_\varphi$ is satisfiable for some
terminal state $S$, so $\varphi$ is satisfiable.
Conversely, if the procedure returns $\unsat$,
the worklist is empty. Hence, $\varphi$ is
unsatisfiable, as it is equivalent to an empty
disjunction.
\end{proof}

\end{document}